\documentclass[11pt]{amsart}

 \usepackage[margin=1in]{geometry}





\newcommand{\private}[1]{}
\oddsidemargin=0in
\evensidemargin=0in










\usepackage{amssymb, amsmath, amscd, amsthm, color, epsfig, url, tikz, graphicx, hyperref, booktabs, pifont}

\usepackage[all]{xy}          
\xyoption{dvips}              
\newcommand{\xmark}{\textcolor{red}{\ding{55}}}

\setcounter{tocdepth}{2}

\makeatletter
\renewcommand\l@subsection{\@tocline{2}{0pt}{2pc}{5pc}{}}
\makeatother

\tikzset{
  mynode/.style={circle, draw, minimum size=7mm, inner sep=0pt},
  myarrow/.style={->, >=Stealth, thick},
  mylabel/.style={midway, above}
}

\usetikzlibrary{arrows.meta, positioning}













\theoremstyle{plain}
\newtheorem{thm}{Theorem}[section]
\newtheorem{prop}[thm]{Proposition}

\theoremstyle{definition}
\newtheorem{defin}[thm]{Definition}
\newtheorem{example}[thm]{Example}
\newtheorem{def/ex}[thm]{Definition/Example}

\theoremstyle{remark}

\makeatletter
\@namedef{subjclassname@2020}{%
  \textup{2020} Mathematics Subject Classification}
\makeatother


\begin{document}


\title[BallotRank]{BallotRank: A Condorcet Completion Method for Graphs}


\author{Jason  Douglas Todd}
\address{Division of Social Sciences, Duke Kunshan University, No. 8 Duke Avenue, Kunshan, China, 215316}
\email{jason.todd@dukekunshan.edu.cn}
\urladdr{https://sites.google.com/view/jasondtodd}

\author{Ismar Voli\'c}
\address{Department of Mathematics \& Statistics, Wellesley College, 106 Central Street, Wellesley, MA 02481}
\email{ivolic@wellesley.edu}
\urladdr{ivolic.wellesley.edu}



\begin{abstract}
We introduce BallotRank, a ranked preference aggregation method derived from a modified PageRank algorithm. It is a Condorcet-consistent method without damping, and empirical examination of nearly 2,000 ranked choice elections and over 20,000 internet polls confirms that BallotRank always identifies the Condorcet winner at conventional values of the damping parameter. We also prove that the method satisfies many of the same social choice criteria as other well-known Condorcet completion methods, but it has the  advantage of being a natural social welfare function that provides a full ranking of the candidates.
\end{abstract}

\subjclass[2020]{Primary 91B12; Secondary 91B14}
\keywords{PageRank, Condorcet methods, minimax, ranked pairs, Schulze beatpath, social choice criteria, monotonicity, IIA, Pareto}


\maketitle 

\tableofcontents

\parskip=6pt
\parindent=0cm

\section{Introduction}\label{S:Intro}
Ranked choice voting is having a moment. In addition to the intense media coverage of the recent New York City mayoral election, where primaries were held under ranked choice rules, there are also new electoral reforms in Hawaii, the District of Columbia, and elsewhere. Although to the lay public instant runoff (IRV) and its multi-winner variant, single-transferable vote (STV), are synonymous with ranked choice voting, these methods represent but the tip of the iceberg to social choice theorists. Here, older systems like the Borda count and Copeland/Llull method compete for attention with more modern inventions such as minimax and ranked pairs. We add to this list by introducing a new ranked voting method, which we call \emph{BallotRank}.

BallotRank builds upon the equivalence between a preference profile and a graph first noted in \cite{mcgarvey1953theorem} and extended by \cite{BJNR:ConvergenceVoting} to produce the so-called ``Condorcet graph.''\footnote{See \cite{Volic2024} for an interesting discussion of the equivalence.} By modifying this graph 
and then applying a variant of the PageRank algorithm \cite{bp1998cn}, BallotRank becomes a natural social welfare function. If one elects to eliminate the PageRank damping parameter by setting it to 1, BallotRank is a Condorcet-consistent method satisfying many well-known social choice criteria. If instead the damping parameter is set to conventional values such as 0.85, it still correctly identifies the Condorcet winner in every single one of the more than 20,000 elections we examine. 

BallotRank is also a more natural social welfare function than some of the other Condorcet completion methods like minimax, ranked pairs, and Schulze/beatpath. The calculation of the rankings, including identifying the winner, requires input from the entire system. The power method of calculating BallotRank iteratively updates the rankings, but it needs the full information from the previous step to do so. Other methods are primarily designed to find the Condorcet winner or break Condorcet cycles, and the fact that they can be used to produce a ranking is an afterthought. BallotRank's final ranking, on the other hand, is canonical and intrinsic to the method. 

Several recent works have proposed PageRank- or Markov chain–based methods for aggregating pairwise comparisons. 
Negahban et al. \cite{NOS:IterativeRanking, NOS:RankCentrality} propose \emph{rank centrality}, an iterative spectral algorithm that recovers latent item scores from noisy pairwise comparisons under the Bradley–Terry–Luce model. These methods interpret rankings as estimators of qualities and focus on consistency, sample complexity, and optimality guarantees. From a social choice theory viewpoint, Altman and Tennenholtz \cite{AT:PageRankAxioms} provide an axiomatic characterization of PageRank as a ranking system on directed graphs, establishing a representation theorem based on graph-theoretic axioms tailored to web-ranking contexts. Brandt and Fischer \cite{BF:PageRankTournament} build on this and analyze PageRank as a weak tournament solution. 

The above papers are far apart in intent and spirit from the constructions in this paper. Research closer to ours was done by Bana et al. \cite{BJNR:ConvergenceVoting}, who introduce \emph{convergence voting}, a PageRank-inspired social choice function defined on pairwise comparison graphs and interpreted as the outcome of an iterative negotiation process converging to a stationary distribution. Their method is explicitly framed as a consensus rule lying between Condorcet and Borda methods and does not aim to satisfy Condorcet consistency. We will provide more details about the difference between convergence voting and BallotRank in Section \ref{sec:comparison}.

The paper is structured as follows. In Section \ref{sec:pagerank}, we first present the intuition and mathematics behind the PageRank algorithm before modifying it to account for the possibility of cycles. In Section \ref{sec:toyelection}, we work through a toy example to fix ideas. Section \ref{sec:empirical} then examines BallotRank empirically, applying it to online polling and real-world legislative elections and paying particular attention to the problem of Condorcet winners, which BallotRank routinely identifies. Section \ref{sec:comparison} compares the empirical performance of BallotRank to minimax, ranked pairs, and Schulze methods -- standard and familiar methods -- as well as the newer convergence voting mentioned above. As we will demonstrate, BallotRank can behave differently from these methods, and we regard this as an advantage owing to the fact that BallotRank uses more information from the profile to determine the ranking than these other methods. We switch gears in Section \ref{sec:socialchoice} to consider BallotRank against a standard set of social choice criteria including monotonicity, Pareto, and independence of irrelevant alternatives. We conclude that its behavior is largely in line with the standard Condorcet completion methods mentioned above, but there are also appreciable differences. Finally, Section \ref{sec:conclusion} offers a summary of the results and possible directions for further investigation.

\section{From PageRank to BallotRank} \label{sec:pagerank}
This section presents PageRank and BallotRank intuitively, formalizes both, and translates them into the language of linear algebra. Along the way, we prove that under certain conditions, BallotRank is a Condorcet-consistent method.

\subsection{Social choice under PageRank and BallotRank}
The intuition behind PageRank begins with the common topological interpretation of a random walk along a network, but first we must be clear about the network itself. We first introduce an unweighted graph and add weights, then sketch out the PageRank algorithm, and finally discuss the necessary addition of self-loops.

\subsubsection{Basic network}
In our setting, $n$ nodes represent candidates (more generally, choices) and edges run from losers to winners.\footnote{This differs from the graph theoretic depiction of \cite{mcgarvey1953theorem}, in which edges run from winners to losers.} We will come back to the issue of determining the losers and winners of head-to-head contests in Section \ref{sec:basicalgorithm}.

We assume the network is fully connected in the sense that it is complete if regarded as an undirected graph. With edge orientations, the in-degree and out-degree vary depending upon candidate popularity. A Condorcet winner, by definition, beats all other candidates and hence has an out-degree of zero and an in-degree equal to $n-1$; a Condorcet loser, conversely, has an out-degree of $n-1$ and an in-degree of zero.

\subsubsection{Adding weights}
The first complication involves moving from this unweighted network, in which edges define a binary social preference, to a weighted network, with weights defined by pairwise margins of victory. Thus if candidate $a$ is preferred to candidate $b$ by 363 voters and $b$ is preferred to $a$ by only 300, the network would feature an edge from node $b$ to node $a$ with weight 63. In matrix form, this network graph is an $n \times n$ adjacency matrix whose $A_{ij}$ entry equals $j$'s margin of victory if $j$ is relatively more preferred and equals 0 if $i$ is more preferred.\footnote{This also differs from what \cite{BJNR:ConvergenceVoting} call a ``Condorcet graph,'' in which edges run in both directions with weights defined by pairwise votes, rather than a single edge weighted by the difference between them.}

Starting from a randomly selected candidate/node, a random walk along a network in this setting would involve traveling along one of the outgoing edges to a more preferred candidate, with probabilities proportional to the edge weights. Yet the presence of Condorcet winners or losers presents difficulties: the random walker would never leave the Condorcet winner node and never reach (or return to) a Condorcet loser node. Because a similar problem exists in the original PageRank domain of the web, where some pages feature no outgoing links and others are never linked to, the algorithm presents a solution in the form of probabilistic teleportation. 

\subsubsection{Adding the damping parameter}
The key innovation of Brin and Page was introducing the damping parameter, $d$. With probability $d<1$,\footnote{Since \cite{bp1998cn}, the damping parameter has conventionally been set to 0.85, implying a teleportation probability of 0.15.}  the walker travels along an outgoing edge as described above, while with probability $(1-d)>0$, she teleports to a random node where each has an equal likelihood of serving as the destination. Now all candidate nodes are reachable from all other candidates, and in the limit of an infinitely long random walk, each candidate has a particular probability of hosting the walker. This probability is the candidate's PageRank. 

A second way to intuit what PageRank does requires reimagining the nodes as buckets for holding liquid and the edges as pipes for transporting flows between them. Edge weight now corresponds to the volume of flow an edge is capable of supporting. Measuring PageRank on this network is equivalent to injecting a fixed quantity of liquid, normalized to one, and noting where and in what quantities it pools. In this analogy, the damping factor might correspond to the constant evaporation of a particular proportion of the liquid, which then precipitates uniformly across the network. 

The liquid itself is usefully thought of as something like a social consensus or the pooled votes of the electorate, and PageRank measures to which candidates they flow and in what volumes. Should a Condorcet winner exist, every drop of liquid that does not evaporate due to the damping parameter will pool there. On the other hand, all liquid will flow away from Condorcet losers as the damping parameter approaches one and the evaporation process stalls. If the damping parameter is set to one and evaporation ceases, the PageRank algorithm will unerringly choose the Condorcet winner where it exists.

\subsubsection{Adding self-loops}
Yet even with the addition of the damping parameter, problems remain in the social choice setting. Imagine a five-candidate race with a three-candidate top cycle 
and two clear losers
. All three candidates in the Condorcet cycle are socially preferred to the losing duo, yet each in-cycle candidate can only defeat one of the other at the top. Applying PageRank to the network as described above would yield suitably low scores to the losers and differentiate between them, with the Condorcet loser ranking last. But inside the Condorcet cycle, all three candidates would split the remaining PageRank equally between them. Because each loses only once, the probability of transferring PageRank to the socially preferred candidate equals one and the relative magnitudes of their pairwise margins are irrelevant. 

The solution is to add self-loops to the network, replacing the matrix diagonal of zeroes with a vector of positive numbers which permit the algorithm to distinguish between more popular and less popular nodes in the Condorcet cycle. By adding a self-loop, we allow an in-cycle candidate to retain some PageRank rather than redistributing it all to the socially preferred candidate. Intuitively, self-loop weights should increase in proportion to candidate popularity. 

We propose weighting candidate $a$'s self-loop by candidate $a$'s share of the total win margin, defined as $a$'s summed weighted in-degree divided by the network's total degree.\footnote{At an earlier stage, we also experimented with self-loop weights defined as $\max(\text{total off-diagonal in-degree} - \text{total off-diagonal out-degree}, 0)$. This method did not always identify the Condorcet winner, so we abandoned it.} Thus when a top-cycle candidate barely loses to a fellow in-cycle candidate yet soundly defeats those outside the cycle, she will likely retain considerable PageRank; but when a top-cycle candidate is soundly defeated by another top-cycler, she will lose relatively more PageRank to her socially preferred peer.

\subsection{Formalizing PageRank and BallotRank}
We now turn to formalizing the intuition. We first define PageRank on an unweighted network without teleportation, then add the damping parameter, edge weights, and self-loops, thereby defining BallotRank. 

\subsubsection{The basic algorithm}\label{sec:basicalgorithm}
Consider a set of $n$ \emph{candidates} (or \emph{outcomes} or \emph{choices}) $\mathcal C=\{1, 2, \dots, n\}$ running in an election. Denote by $\mathcal V$ the (finite) set of \emph{voters} (or \emph{agents}). Suppose that for each pair of candidates $(a,b)\in \mathcal C\times\mathcal C$, each voter has a preference for one over the other. The preference might be inferred from a voter's ranked ballot or by the voter considering all pairs of candidates and recording a preference in each case. The ordering is strict and total, so, in particular, it is transitive and there are no ties. The collection of all ranked ballots or pairwise preferences across all the voters is called a \emph{profile}. 

If more voters prefer candidate $a$ to candidate $b$, we say  that $a$ \emph{beat} $b$, or that $b$ \emph{lost to} $a$. We will denote this situation by $a\succ b$. If there is a tie, then there is no edge between the candidates.\footnote{These ties -- or the admission of incomplete ballots -- could in theory lead to disconnected graphs. Yet the only way of obtaining a fully disconnected node is for a candidate to be equally preferred to every other candidate, highly unlikely in practice. Similarly, the only way that incomplete ballots could yield a disconnected graph would be for disjoint subsets of voters to rank disjoint subsets of candidates, also unheard of in real-world settings. For these reasons, we assume that graphs are always connected.}
Note that it is possible for a profile to contain \emph{Condorcet cycles}, namely pairwise preferences that yield $a_1\succ a_2 \succ \cdots\succ a_k\succ a_1$ for some $k\geq 3$.

Let $W(a)$ be the set of candidates losing to $a$ in such pairwise comparisons, and $L(b)$ the total number of pairwise losses by $b$. Translated into the standard PageRank setup that starts with a directed graph, candidates correspond to the nodes and edges point from $b$ to $a$ if $b$ lost to $a$. The out-degree of $b$ is precisely $L(b)$.

If candidate $a$ beats all other candidates (i.e., if $W(a)=\mathcal C\setminus \{a\}$), she is termed the \emph{Condorcet winner}. Similarly, the \emph{Condorcet loser} is a candidate $a$ that loses all pairwise contests (i.e., $W(a)=\emptyset$). Condorcet winners or losers may or may not exist, but when they do, they are unique.

Ignoring the damping factor for now, define the \emph{PageRank $PR(a)$ of candidate $a$} to be 
\begin{equation} \label{eq:undamped}
    PR(a) = \sum_{b \in W(a)} \frac{PR(b)}{L(b)}.
\end{equation}

In prose, the PageRank of any given candidate is simply the sum of the PageRanks of those candidates that lose to it in head-to-head comparisons divided by their total pairwise losses. This is a recurrence that starts with an initial distribution of $1/n$ for all $a$. Thus $PR(a)$ should be understood as the limiting or stationary solution of the recurrence.

It is a well-known feature of PageRank that if a node is a unique sink with no outgoing edges and $n-1$ incoming edges, then PageRank will assign this node a value of 1 in the stationary distribution while assigning 0 to all other nodes.\footnote{With multiple sinks, only one will receive a PageRank of 1, but the identity of the node will depend upon the initial distribution of PageRanks and the graph structure. But that situation would not occur in the social choice setting and can be safely ignored.} The initial ``mass'' of the system eventually concentrates at the sink, or, in the analogy above, all the liquid pools at this node. In our setting, when a Condorcet winner exists, it will be precisely this unique sink. In other words, ignoring the damping parameter (or, equivalently, setting it to 1) means that PageRank is a Condorcet-consistent method.

\subsubsection{Adding the damping parameter}
The damping parameter, $0<d<1$, ensures that the recurrence does not terminate at the sink and allocate all PageRank to the Condorcet winner: 
\begin{equation} \label{eq:damped}
    PR(a) = \frac{1 - d}{n} + d \sum_{b \in W(a)} \frac{PR(b)}{L(b)}.
\end{equation}

Equation (\ref{eq:damped}) represents a random walk along the edges of the graph with a probability $1-d$ of teleportation to a random node. It also defines a Markov chain, a stochastic process in which the probability of the next state depends only on the current state. PageRank can thus be thought of as the stationary probability distribution of this Markov chain.


\subsubsection{Adding weights}

In addition to pairwise victories, we also want to consider the margins of victory via edge weights. Let $M(a,b)$, $a\neq b$, be the margin by which $a$ beats $b$. If there is a tie or if $a$ loses to $b$, then $M(a,b)=0$.  Let the \emph{total loss margin} of $b$ be
\begin{equation}\label{eq:totalloss}
\widetilde{TL}(b)=\sum_{a\in\mathcal C, \, a\neq b} M(a,b).
\end{equation}


Then the weighted version of \eqref{eq:damped} is
\begin{equation} \label{eq:weighteddamped}
    \widetilde{PR}(a) = \frac{1 - d}{n} + d \sum_{b\in \mathcal C, \, a\neq b} \frac{M(a,b)}{\widetilde{TL}(b)}\widetilde{PR}(b).
\end{equation}
The interpretation of the denominator is that, when $b$ transfers some of its PageRank to a socially preferred alternative, it does so in proportion to how badly that loss compares to its other losses.


\subsubsection{Adding self-loops} Although we now have a standard weighted PageRank setup, this will always rank every candidate in a Condorcet cycle equally. What we want, instead, is a procedure which can distinguish between members of a cycle on the basis of their performances against those in and out of the cycle. To achieve this, we add weighted self-loops to the graph, essentially allowing a candidate to retain a share of her PageRank proportional to the strength of her wins. 

Define $TW(a)$, the \emph{total win margin} of $a$, to be
$$
TW(a)=\sum_{b\in \mathcal C, \, a\neq b}M(a,b).
$$
Also define the \emph{total win margin} of the profile to be
$$
TW=\sum_{a\in \mathcal C}TW(a).
$$
To define the diagonals of the matrix, let
\begin{equation}\label{eq:diagonal}
M(a,a)=\frac{TW(a)}{TW}.
\end{equation}
With the diagonal suitably defined, we can modify our definition of total loss margin $\widetilde{TL}(b)$ from Equation (\ref{eq:totalloss}) to include the $M(a,a)$ term:
\begin{equation}\label{eq:totallossself-loops}
TL(b)=\sum_{a\in\mathcal C} M(a,b).
\end{equation}



\begin{defin}\label{def:BallotRank}
    Define the \emph{BallotRank} of candidate $a$ as
\begin{equation}\label{eq:diagweighted}
  BR(a)\;=\;\frac{1-d}{n}\;+\;d\sum_{b\in\mathcal C}
  \frac{M(a,b)}{TL(b)}\;BR(b).
\end{equation}
\end{defin}


Thus defined, BallotRank lets each candidate retain a portion of her BallotRank proportional to the intensity of her wins. As before, candidate $b$ redistributes a share of her BallotRank to every candidate that defeats her, proportional to the corresponding loss margin $M(-,b)$, but now, with self-loops, she also keeps a share for herself proportional to her total win margin. In the context of a Condorcet cycle, this modification ensures that BallotRank does not circulate endlessly within the cycle but begins to pool with those candidates posting better victory margins.


\subsection{Linear algebra interpretation of BallotRank}

For computational purposes,  it is helpful to formulate BallotRank in the language of linear algebra. 
Define the nonnegative $n\times n$ matrix  $\boldsymbol{M}=(M(a,b))$, $1\leq a,b\leq n$, by the values of $M(a,b)$ as above, that is, the win margins of $a$ over $b$ when $a\neq b$ and by Equation \eqref{eq:diagonal} when $a=b$.

Finally, define the column‑stochastic matrix $\boldsymbol{\ell}=(\ell(a,b))$
 by
\begin{equation}
\ell(a,b) \;=\; \frac{M(a,b)}{TL(b)}.
\end{equation}
By construction, each column of $\boldsymbol{\ell}$ adds to $1$ and $\ell(a,b)=0$ whenever $b$ beats $a$ (and $a\neq b$).

We will occasionally have need for the BallotRank matrix without the diagonal self-loops. We will denote that matrix by 
$\tilde{\boldsymbol{\ell}}=(\tilde{\ell}(a,b))$, corresponding to the weighted PageRank expression from Equation \eqref{eq:weighteddamped}.

If we collect the BallotRanks of all candidates into a vector
\[
    \mathbf{BR} = \begin{pmatrix}
        BR(1)\\
        BR(2)\\
        \vdots\\
        BR(n)
    \end{pmatrix},
\]
then the \emph{BallotRank vector} is the unique solution of the fixed-point equation
\begin{equation}\label{eq:power}
   \mathbf{BR} \;=\; \frac{1-d}{n}\,\mathbf{1} \;+\; d\,\boldsymbol{\ell}\,\mathbf{BR},
\end{equation}
where $\mathbf{1}$ is the column vector of 1's. This simply says that $\mathbf{BR}$ is the stationary distribution of the random walk governed by $\boldsymbol{\ell}$ with teleportation probability $1-d$.  
Equivalently, BallotRank is the dominant right eigenvector of a rescaled adjacency matrix corresponding to the eigenvalue 1, and the solution is given by
\begin{equation} \label{eq:solve}
    \mathbf{BR} = (\mathbf{I} - d \boldsymbol{\ell})^{-1}\cdot \frac{1-d}{n}\mathbf{1}.
\end{equation}

In practice, $\mathbf{BR}$ can be approximated to an arbitrary precision by the power method, or simple iteration of  \eqref{eq:power}. Alternatively, popular network analysis packages in \texttt{R} and \texttt{Python} contain implementations of PageRank with user-specified $d$ (but defaulting to $d=0.85$) that solve Equation \eqref{eq:solve}. However, as $d$ approaches unity, the eigenvector-based approach fails (because $\mathbf{I} - d \boldsymbol{\ell}$ is singular and cannot be inverted), while the power method remains viable for all values of $d$.

In summary, starting with a ranked profile or pairwise preferences, BallotRank produces a full ranking of the candidates. At $d=1$, BallotRank will always identify the Condorcet winner if one exists, making it a \emph{Condorcet completion} or an \emph{extended Condorcet} method.\footnote{It is worth noting, however, that when $d=1$ and a Condorcet winner exists, BallotRank fails to produce a full ranking, instead allocating 1 to the Condorcet winner and 0 to all other candidates.} In theory, BallotRank might not identify the Condorcet winner at $d\geq 0.85$, but as we will see in Section \ref{sec:empirical}, this does not seem to happen in practice. Other familiar Condorcet completion methods include minimax, ranked pairs, and Schulze/beatpath. They also pick the Condorcet winner when one exists, and differ only in the case of a Condorcet cycle. While these methods can also be redefined so as to produce candidate rankings, BallotRank naturally yields a ranking because that was the original intent behind PageRank. We compare BallotRank with these other methods in Section \ref{sec:comparison}.

\section{Working through a toy election}\label{sec:toyelection}
\label{sec:example}
Suppose we have a set of candidates $\mathcal C=\{a, b, c, d\}$ and 12 voters are asked to rank them. One voter's ranking is $a\succ b\succ c\succ d$, seven voters submitted $a\succ b\succ d\succ c$, and the last four ranked $b\succ a\succ c\succ d$. The following matrix encodes the margins:
$$    
 \begin{pmatrix}
0 & 4 & 12 & 12 \\ 
0 & 0 & 12 & 12\\
0 & 0 & 0 & 0 \\
0 & 0 & 2 & 0 
\end{pmatrix}
$$
To interpret the matrix, consider the 12 in the second row and third column. This entry, $M(b,c)$, means that $b$ beat $c$ by 12 votes (because the electorate unanimously preferred $b$ to $c$). Translating this to a graph, we have

\begin{center}
\begin{tikzpicture}[node distance=3cm]
  \node[mynode] (a) {$a$};
  \node[mynode, right of=a] (b) {$b$};
  \node[mynode, below of=a] (d) {$d$};
  \node[mynode, below of=b] (c) {$c$};

  \draw[myarrow] (b) -- (a) node[midway, above] {$4$};
  \draw[myarrow] (c) -- (b) node[pos=0.5, right, yshift=4pt] {$12$}; 
  \draw[myarrow] (c) -- (a) node[pos=0.8, right, yshift=4pt] {$12$};
  \draw[myarrow] (d) -- (a) node[pos=0.5, left, yshift=4pt] {$12$}; 
  \draw[myarrow] (d) -- (b) node[pos=0.8, left, yshift=4pt] {$12$};
  \draw[myarrow] (c) -- (d) node[midway, above] {$2$};
\end{tikzpicture}
\end{center}
The graph makes it clear that $a$ is the Condorcet winner. In the matrix, a Condorcet winner exists when every non-diagonal entry in a row is non-zero. It is also evident that $c$ is the Condorcet loser; in matrix form, the Condorcet loser is marked by a row containing only zeros.

Calculating total wins, or row sums, we have $$ TW(a)=28, \ \ TW(b)=24, \ \ TW(c)=0, \ \ TW(d)=2, $$
and $TW=28+24+0+2=54$. The diagonal entries are thus
$$ M(a,a)=\frac{28}{54}=\frac{14}{27}, \ \  M(b,b)=\frac{24}{54}=\frac{4}{9}, \ \  M(c,c)=\frac{0}{54}=0, \ \  M(d,d)=\frac{2}{54}=\frac{1}{27}. $$

To make the BallotRank matrix, $\boldsymbol{\ell}$, we normalize each column, which finally yields
$$  
\boldsymbol{\ell}=    
\begin{pmatrix}
1 & \frac{9}{10} &  \frac{6}{13}  &  \frac{324}{649}  \\[4pt] 
0 & \frac{1}{10} & \frac{6}{13} &  \frac{324}{649}\\[2pt]
0 & 0 & 0  &  0\\[2pt]
0 & 0 &  \frac{1}{13}  &   \frac{1}{649}
\end{pmatrix}
$$
The graph now looks like

\begin{center}
\begin{tikzpicture}[node distance=3cm]
  \node[mynode] (a) {$a$};
  \node[mynode, right of=a] (b) {$b$};
  \node[mynode, below of=a] (d) {$d$};
  \node[mynode, below of=b] (c) {$c$};

  \draw[myarrow] (b) -- (a) node[midway, above] {$\tfrac{9}{10}$};
  \draw[myarrow] (c) -- (b) node[midway, right] {$\tfrac{6}{13}$};
  \draw[myarrow] (c) -- (a) node[pos=0.8, right, yshift=4pt] {$\tfrac{6}{13}$};    
  \draw[myarrow] (d) -- (a) node[midway, left] {$\tfrac{324}{649}$};
  \draw[myarrow] (d) -- (b) node[pos=0.8, left, yshift=4pt] {$\tfrac{324}{649}$}; 
  \draw[myarrow] (c) -- (d) node[midway, above] {$\tfrac{1}{13}$};

  \draw[myarrow] (a) edge[loop left,  min distance=10mm, in=150, out=210] node {$1$} (a);
  \draw[myarrow] (b) edge[loop right, min distance=10mm, in=30, out=330] node {$\tfrac{1}{10}$} (b);
  \draw[myarrow] (c) edge[loop right, min distance=10mm, in=330, out=30] node {$0$} (c);
  \draw[myarrow] (d) edge[loop left,  min distance=10mm, in=210, out=150] node {$\tfrac{1}{649}$} (d);
\end{tikzpicture}
\end{center}

Calculating BallotRank with the conventional $d=0.85$ and without damping ($d=1$) produces
\begin{center}
\begin{tabular}{l|cccc}
\toprule
               & $a$ & $b$ & $c$ &  $d$ \\
\midrule
$d=0.85$ & 0.8469 & 0.0756 & 0.0375 & 0.0400 \\
$d=1.00$ & 1.0000 & 0.0000 & 0.0000 & 0.0000 \\
\bottomrule
\end{tabular}
\end{center}
When $d=0.85$, BallotRank produces the ranking $a\succ b\succ d\succ c$. The result for $d=1$ illustrates that, without damping, the Condorcet winner $a$ absorbs all BallotRank in the system. If one only cares about identifying a single winner or determining whether a Condorcet winner exists, then setting $d$ to 1 is useful. But if one desires a full ranking in the presence of a Condorcet winner, then it is necessary to set $d$ to something strictly less than 1.

\section{Empirical analyses}\label{sec:empirical}
This section first examines four datasets from diverse decision-making settings: an online polling service; American local, state, and federal elections; Scottish local councils using STV; and Australian state legislative elections. We then delve deeper into four notorious American elections, two without Condorcet winners and two in which IRV failed to crown the Condorcet winner.

\subsection{BallotRank and Condorcet winners in real elections}
We begin our empirical analyses with a dataset from the Condorcet Internet Voting Service (CIVS), a free online platform for ranked choice elections, containing 22,309 usable polls, of which 73.62\% yield a Condorcet winner.\footnote{These data, last updated December 15, 2024, are freely available at \url{https://civs1.civs.us/data-releases.html}.} Figure \ref{fig:all-data} presents this subset of the CIVS data in pink, highlighting that they substantially differ from real-world legislative elections. In the subset of CIVS elections that have a Condorcet winner, the mean number of choices is 9, the mean number of ballots cast is 33, and on average there were nearly seven times as many votes as choices. Given the unusual diversity of poll parameters represented, we consider the CIVS data to be a difficult first test for BallotRank. Applying BallotRank to the 16,425 elections with a Condorcet winner, we test values of the damping parameter in $\{0.5, 0.85, 0.999, 1\}$. Under these settings, and despite the oddities of the CIVS data, BallotRank successfully identifies the Condorcet winner in every single poll. 

\begin{figure}
    \centering
    \includegraphics[width=\linewidth]{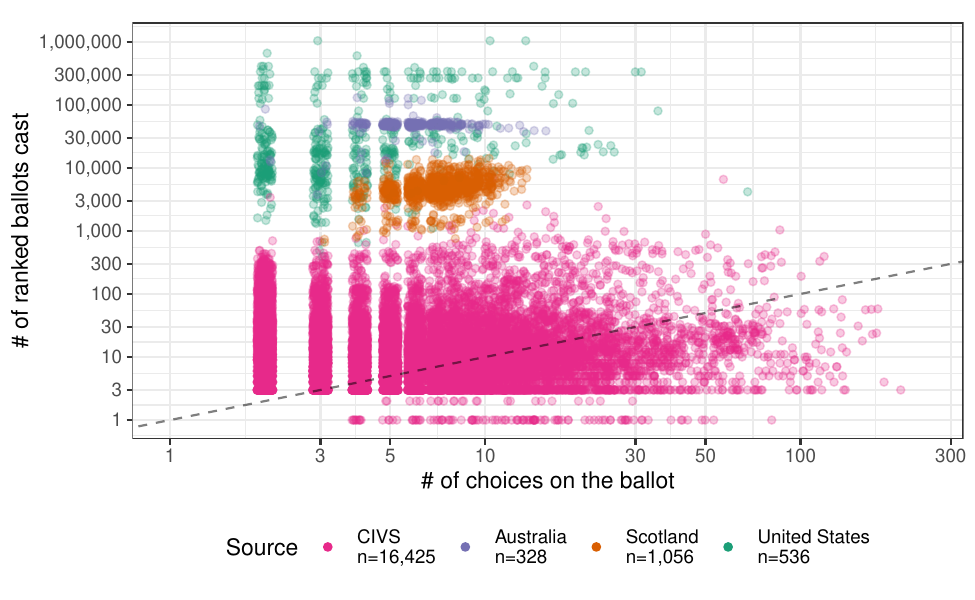}
    \caption{18,345 elections with a Condorcet winner exhibit considerable variation in choice set and electorate sizes; BallotRank correctly pegs the Condorcet winner every time. Data are plotted on a log-log scale.}
     \label{fig:all-data}
\end{figure}

Turning to elections with real political consequences, we begin with a set of 538 ranked choice elections from across the United States, all but two of which (99.63\%) have a Condorcet winner.\footnote{We discuss the elections without Condorcet winners, as well as two Condorcet elections in which instant-runoff voting failed to select the Condorcet winner, in the next subsection.} Sourced from the FairVote Dataverse repository \cite{FairVote}, these elections include 28 officer elections from the American Psychological Association from 1998 to 2021 and 510 elections from across the country to local, state, and federal offices from 2004 to 2024. The mean number of candidates is 5, the mean number of voters is 70,794, and the average ratio is 17,384 voters per candidate. Figure \ref{fig:all-data} displays the elections in this dataset in green, highlighting the reduction in choices and greatly increased electorate relative to the online polls analyzed above. We again apply BallotRank for values of $d \in \{0.5, 0.85, 0.999, 1\}$ and in all cases the Condorcet winner is correctly identified.

Another dataset covers 1,070 local Scottish council elections, 1,056 (98.69\%) of which feature a Condorcet winner. These data, from David McCune and Alexander Graham-Squire \cite{MGS:Scotland}, include almost every council election from 2012, 2017, and 2022.\footnote{As discussed in their paper, they in turn source the 2012 and 2017 data from Andrew Teale's \textit{Local Elections Archive Project} \cite{Teale:LEAP}.} In these mostly multiwinner single-transferable vote elections, the average number of candidates is 7 and the mean number of voters is 5,125. Figure \ref{fig:all-data} presents these data in orange, where the average election features 719 voters per candidate. Running BallotRank on $d \in \{0.5, 0.85, 0.999, 1\}$, we again find that the Condorcet winner is correctly identified every time.

The final real-world dataset, courtesy of Nicholas O. Stephanopoulos \cite{S:FindingCondorcet}, comes from Australia and appears in purple in Figure \ref{fig:all-data}. These 328 elections include mayoral elections in New South Wales from 2021, New South Wales state legislative elections from 2015--2023, and Victoria state legislative elections from 2018--2022. Every election in this dataset has a Condorcet winner. The average numbers of candidates and voters are 6 and 48,475, respectively; the average ratio between them is 8,588 voters per candidate. Application of BallotRank at parameter values $d \in \{0.5, 0.85, 0.999, 1\}$ yet again produces a perfect correspondence between Condorcet winners and BallotRank winners.

\subsection{A closer look at four U.S. elections}
On March 3, 2009, voters in the city of Burlington, Vermont went to the polls to elect a mayor under a recently instituted IRV system.\footnote{See \cite{Ellenberg2015} for a lengthier treatment of the Burlington case.} The five candidates on the ballot included Progressive Party incumbent Bob Kiss, Democrat Andy Montroll, Green James Simpson, independent Dan Smith, and Republican Kurt Wright. Under the conventional first-past-the-post (FPTP) system, Wright would have won; under the extant IRV system, Kiss won instead. But, notoriously, Andy Montroll was the Condorcet winner and was not elected. As the top-left panel of Figure \ref{fig:odd-data} reveals, however, BallotRank at \textit{any} value of the damping parameter greater than zero successfully crowns Montroll as the winner. 

The panel presents the results from applying BallotRank to the Burlington mayoral preference profile for all values of $d \in \{0, 0.01, 0.02, \dots, 0.98, 0.99, 1\}$, as well as the final ranking and BallotRank values when $d=1$. Note that when a Condorcet winner exists and $d$ is set to 1, the probability of teleportation falls to zero, all that matters is the network structure, and the Condorcet winner acts as a sink, causing the BallotRank values to collapse to $\{0, 1\}$. In other words, at $d=1$ in the presence of a Condorcet winner, BallotRank values function as an indicator variable for the Condorcet winner and cease to provide a complete ranking of the candidates. Such a ranking is available, however, at all values of $d<1$, and the labels in Figure \ref{fig:odd-data} reflect this ranking: BallotRank places the IRV winner (Kiss) ahead of the FPTP winner (Wright).

\begin{figure}
    \centering
    \includegraphics[width=\linewidth]{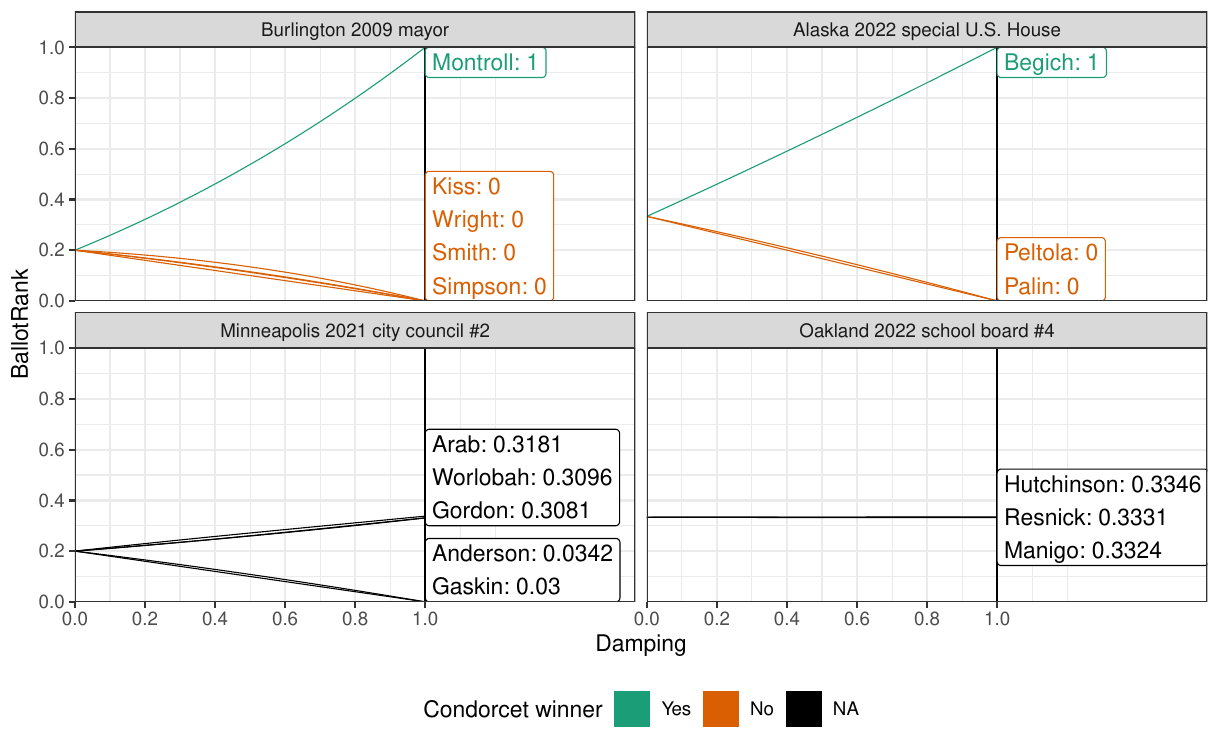}
    \caption{Four case studies demonstrate BallotRank's superiority to IRV, its behavior across the range of $d$, its convenient production of a complete ranking, and its ability to rank candidates within cycles.}
    \label{fig:odd-data}
\end{figure}

The special election for Alaska's at-large congressional district, held on August 16, 2022, is similarly infamous amongst election aficionados for failing to elect the Condorcet winner when one existed. Held to fill a seat vacated upon the death of incumbent Republican Don Young, this election pitted Democrat Mary Peltola against Republicans Sarah Palin and Nick Begich III in Alaska's first experiment with IRV. When the results were announced two weeks later, Democrat Peltola was the upset winner and Begich was the first eliminated.\footnote{Peltola would also have won under FPTP rules.} Yet despite earning the fewest first-preference votes, Republican Begich was in fact the Condorcet winner. But as the top-right panel of Figure \ref{fig:odd-data} reveals, BallotRank again successfully crowns Begich the winner for all $d>0$. Also noteworthy, Peltola ranks ahead of Palin for all $0<d<1$. 

Another pair of real-world elections is remarkable in a different way -- in neither case does a Condorcet winner exist. On November 2, 2021, the roughly 10,000 Minneapolis voters of Ward 2 headed to the polls to elect their local city council member. Their choices were Democratic-Farmer-Labor candidates Tom Anderson and Yusra Arab, Republican Guy Gaskin, incumbent Green Cam Gordon, and Democratic Socialist Robin Wonsley Worlobah. Under IRV rules, Anderson and Gaskin were eliminated in round 1, Gordon was nixed in round 2, and Worlobah defeated Arab in the third and final round.\footnote{Worlobah would also have won under FPTP rules.} 

The bottom-left panel of Figure \ref{fig:odd-data} shows that BallotRank distinguishes between two groups of candidates and selects Arab as the winner over Worlobah. The explanation for both facts lies in the structure of the electorate's preference profile. After calculating pairwise differences between each candidate (see Table \ref{tab:Minneapolis}), it becomes clear that a top-cycle exists: Gaskin beats no one; Anderson beats only Gaskin; and Arab, Gordon, and Worlobah each lose only once and only amongst themselves. Thus, Anderson and Gaskin score BallotRank values near zero, with Anderson ever so slightly higher, and the remaining three candidates score just below $1/3$, as they split the bulk of the total BallotRank. 

\begin{table}[t]
\begin{center}
\begin{tabular}{l|rrrrr|rr|l}
\toprule
               & Gordon & Arab & Worlobah &  Gaskin & Anderson & $\sum$ wins & Total & Diagonal \\
\midrule
Gordon     &   0 &  0 & 73 & 4,592 & 2,192 & 6,857 & 24,455 & 0.280 \\
Arab       & 225 &  0 &  0 & 5,079 & 3,239 & 8,543 & 24,455 & 0.349 \\
Worlobah   &   0 & 15 &  0 & 4,283 & 2,017 & 6,315 & 24,455 & 0.258 \\
Gaskin     &   0 &  0 &  0 &     0 &     0 &     0 & 24,455 & 0.00 \\
Anderson   &   0 &  0 &  0 & 2,740 &     0 & 2,740 & 24,455 & 0.112 \\
\midrule
$\sum$ losses & 225 & 15 & 73 & 16,694 & 7,448 & & & \\
\bottomrule
\end{tabular}
\end{center}
\caption{Win margins and initial diagonals (before column normalization) in the 2021 Minneapolis Ward 2 election. Read 73 in the first row as ``Gordon beat Worlobah by 73 votes.''}
    \label{tab:Minneapolis}
\end{table}

\begin{figure}[h]
\begin{center}
\begin{tikzpicture}[>=stealth,scale=1.7]
  \tikzstyle{cand}=[circle, draw, thick, minimum size=8mm, inner sep=0pt]
  \tikzstyle{edge}=[->, thick]

  \node[cand] (gordon)   at (90:1.8)   {Go}; 
  \node[cand] (arab)     at (18:1.8)   {Ar}; 
  \node[cand] (worlobah) at (306:1.8)  {Wo}; 
  \node[cand] (anderson) at (234:1.8)  {An}; 
  \node[cand] (gaskin)   at (162:1.8)  {Ga}; 


  \draw[edge] (gordon) -- node[midway, pos=0.4, xshift=20pt] {0.999} (arab);
  \draw[edge] (gordon) edge[loop above, min distance=8mm, in=120, out=60]
      node[above] {\small 0.001} (gordon);

  \draw[edge] (arab) -- node[right] {0.978} (worlobah);
  \draw[edge] (arab) edge[loop right, min distance=8mm, in=30, out=330]
      node[right] {\small 0.022} (arab);

  \draw[edge] (worlobah) -- node[below, rotate=-70] {0.997} (gordon);
  \draw[edge] (worlobah) edge[loop right, min distance=8mm, in=330, out=30]
      node[right] {\small 0.004} (worlobah);

  \draw[edge] (gaskin) -- node[midway, pos=0.6, xshift=-20pt]  {0.275} (gordon); 
  \draw[edge] (gaskin) -- node[above] {0.304} (arab);
  \draw[edge] (gaskin) -- node[below, rotate=-40] {0.257} (worlobah);
  \draw[edge] (gaskin) -- node[left]  {0.164} (anderson);
  \draw[edge] (gaskin) edge[loop left, min distance=8mm, in=150, out=210]
      node[left] {\small 0} (gaskin);

  \draw[edge] (anderson) -- node[below, rotate=70] {0.294} (gordon);
  \draw[edge] (anderson) -- node[below, rotate=40]      {0.435} (arab);
  \draw[edge] (anderson) -- node[below]     {0.271} (worlobah);
  \draw[edge] (anderson) edge[loop left, min distance=8mm, in=210, out=150]
      node[left] {\small $1.5\cdot 10^{-5}$} (anderson);
\end{tikzpicture}
\end{center}
\caption{BallotRank graph for the 2021 Minneapolis Ward 2 election.}
    \label{fig:Minneapolis}
\end{figure}
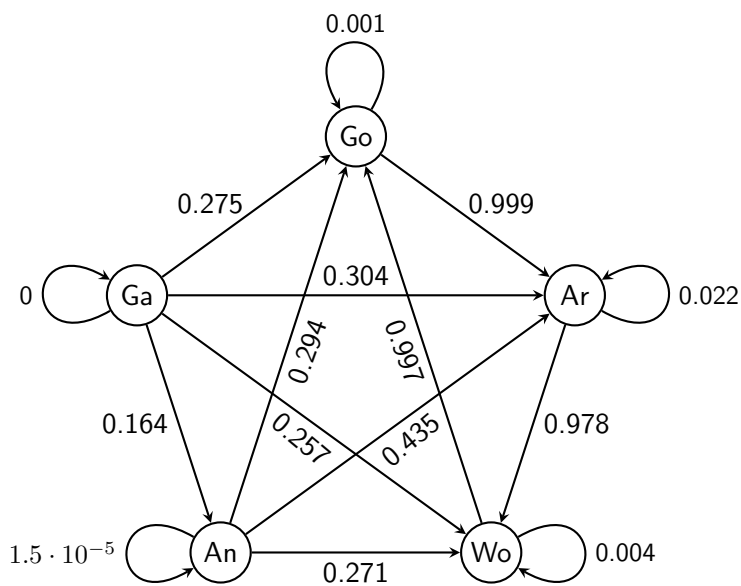

But how does BallotRank determine the relative rankings \textit{within} the top-cycle, and thereby the winner? Here we leverage the pairwise margins, which largely determine the edge weights between nodes, and the total margin of victory, which dictates the self-loop edge weights (see Figure \ref{fig:Minneapolis}). Recall that candidates with larger total margins of victory are initially awarded self-loops with larger edge weights. But to find BallotRank's stationary distribution, we must normalize columns to sum to one. In doing so, large initial diagonals (e.g., Arab's 0.35) and low total \textit{loss} margins (e.g., Arab's 15) translate into larger weights on the self-loops. In effect, candidates are rewarded for winning big and losing small by permitting them to retain more of their BallotRank at each iteration of the power method. 

Arab has the largest total win margin at 8,543, yielding the largest initial diagonal. At the same time, Arab has the smallest total loss margin at 15. In normalizing Arab's column, we divide the diagonal by the new column sum ($\frac{0.35}{0.35+15}$) to obtain a self-loop weight of 0.022. Hers is the largest self-loop weight because her numerator is largest and her denominator smallest. By winning big and losing small, Arab finishes first in BallotRank. Although Gordon barely tops Worlobah in total wins, Worlobah more decisively beats Gordon in total losses; ultimately, Worlobah (the IRV winner) squeaks out a second-place finish ahead of incumbent Gordon. 

The last case study comes from the Oakland Unified School District in California, where on November 8, 2022 voters in District 4 faced three nonpartisan candidates for the local school board: Mike Hutchinson, Pecolia Manigo, and Nick Resnick. In round 1, Resnick had the most first-preference votes and Manigo, with the fewest, was eliminated. In round 2 -- after discovering and correcting an error in the IRV algorithm which had erroneously crowned Resnick the winner -- the Alameda County Registrar of Voters announced Mike Hutchinson had won by 256 votes.\footnote{For more drama, see \url{https://www.nbcbayarea.com/decision-2022/oakland-school-board-race/3116249/}.} 

The bottom-right panel of Figure \ref{fig:odd-data} shows that for all values of $d$, application of BallotRank yields a tight race -- but that Hutchinson wins with Resnick a close second. As the preceding example would suggest, the explanation for the tight bunching of BallotRank values is a classic cycle: Hutchinson beats Resnick by 299 votes, Manigo beats Hutchinson by 46, and Resnick beats Manigo by 557. Here there are only three candidates and hence no higher-order network effects. Effectively, Hutchinson wins under BallotRank because the ratio of his win margin to his loss margin (6.5) is the highest. Resnick (1.86) places second, and Manigo (0.08) ranks last. See Table \ref{tab:Oakland} and Figure \ref{fig:Oakland} for more details.

\begin{table}[h]
    \begin{center}
        \begin{tabular}{l|rrr|rr|l}
            \toprule
            & Hutchinson & Resnick & Manigo & $\sum$ wins & Total & Diagonal \\
            \midrule
            Hutchinson &  0 & 299 &    0 & 299 & 902 & 0.331 \\
            Resnick    &  0 &   0 &  557 & 557 & 902 & 0.618 \\
            Manigo     & 46 &   0 &    0 &  46 & 902 & 0.051 \\
            \midrule
            $\sum$ losses & 46 & 299 & 557  & & & \\
            \bottomrule
        \end{tabular}
    \end{center}
    \caption{Win margins and initial diagonals (before column normalization) in the 2022 Oakland school board District 4 election. Read 46 in the final row as ``Manigo beat Hutchinson by 46 votes.''}
    \label{tab:Oakland}
\end{table}

\begin{figure}[h]
 \begin{center}
\begin{tikzpicture}[>=stealth,scale=1.3]
  \tikzstyle{cand}=[circle, draw, thick, minimum size=7mm, inner sep=0pt]
  \tikzstyle{edge}=[->, thick]

  \node[cand] (Hu) at (90:1.4)   {Hu};  
  \node[cand] (Re) at (330:1.4)  {Re};  
  \node[cand] (Ma) at (210:1.4)  {Ma};  


  \draw[edge] (Hu) -- node[midway, yshift=2pt, xshift=-16pt] {0.993} (Ma);
  \draw[edge] (Hu) edge[loop above, min distance=7mm, in=120, out=60]
      node[above] {\small 0.007} (Hu);

  \draw[edge] (Re) -- node[midway, yshift=2pt, xshift=16pt] {0.998} (Hu);
  \draw[edge] (Re) edge[loop right, min distance=7mm, in=30, out=330]
      node[right] {\small 0.002} (Re);

  \draw[edge] (Ma) -- node[below, pos=0.55] {0.999} (Re);
  \draw[edge] (Ma) edge[loop left, min distance=7mm, in=150, out=210]
      node[left] {\small $9.2\cdot 10^{-5}$} (Ma);
\end{tikzpicture}
\end{center}
\caption{BallotRank graph for the 2022 Oakland election.}
    \label{fig:Oakland}
\end{figure}
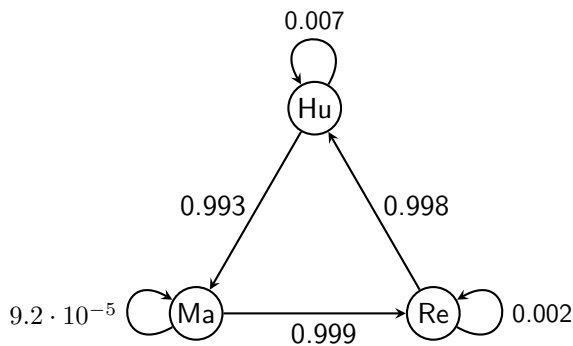

These four case studies demonstrate that at suitably large $d$, BallotRank crowns the Condorcet winner when one exists, identifies a reasonable winner when one does not exist, and produces different results from those obtained under the conventional IRV method. In the next section, we expand the comparison to additional methods with a mixture of real-world elections and contrived examples.

\section{Comparison with other methods}\label{sec:comparison}
Here we compare BallotRank with three additional Condorcet-consistent methods -- minimax, ranked pairs, Schulze (or beatpath) \cite{simpson1969defining, kramer1977dynamical, tideman1987independence, S:ShultzeMethod} -- and another PageRank-inspired method mentioned in the Introduction called convergence voting \cite{BJNR:ConvergenceVoting}. Because, like BallotRank, the first three methods always select the Condorcet winner when one exists, we examine only the 16 real-world elections without a Condorcet winner, two from the United States and 14 from Scotland. Table \ref{tab:no-condorcet} lists these contests; the winners under BallotRank, minimax, ranked pairs, and Schulze; and an indicator for complete agreement between them. We contrast BallotRank with convergence voting separately; although convergence voting embraces a similar network logic as BallotRank, it is \textit{not} guaranteed to find the Condorcet winner when one exists.

The headline finding here is that although the three Condorcet-completion comparison methods achieve complete agreement (barring the complication of ties), BallotRank frequently differs from the consensus in its choice of winner. Among this sample of 16 elections, BallotRank selected the consensus winner only six times -- and in two of those instances, BallotRank's choice is contingent upon the choice of the damping parameter, $d$. 

Before exploring an example of $d$-dependence, it is worth emphasizing that BallotRank's frequent disagreement with minimax, ranked pairs, and Schulze is a feature, not a bug. Minimax, for example, reduces to a single number for each candidate -- the strength of their worst defeat. Once this number is fixed, all other wins and losses are ignored. Schulze looks at strongest paths, but once the strongest path comparison is fixed, no other information is needed.   BallotRank, on the other hand, iteratively utilizes more information from the preference profile, including second-, third-, and higher-order network effects, as well as candidates' total win and loss margins, to produce its output. We should therefore expect its output to differ from those of other methods, particularly in the presence of preference cycles. 

At the same time, as a social welfare function, BallotRank also produces output that is richer in information. Rather than merely identifying a winner, BallotRank -- as its name implies -- produces a complete ranking of all candidates.\footnote{Unless $d=1$ and a Condorcet winner exists, in which case BallotRank values collapse to 1 for the Condorcet winner and 0 for everyone else.} This may prove fruitful in the case of multiwinner contests, as second or third place finishers are identified without additional rounds of electoral mathematics. Full rankings may also yield useful information for losers, who may use it when estimating the feasibility of running in the next electoral cycle.

\begin{table}[h]
    \centering
    \begin{tabular}{l|l|l|l|l|c}
        \toprule
        Election & BallotRank & Minimax & Ranked Pairs & Schulze & Match \\
        \midrule
        Oakland 2022 \#4 & Hutchinson & Hutchinson & Hutchinson & Hutchinson & $\checkmark$ \\
        Minneapolis 2021 \#2 & Arab & Worlobah & Worlobah & Worlobah & \\
        \midrule
        Argyll Bute 2022 \#2 & McFadzean & Currie & Currie & Currie & \\
        Angus 2012 \#3 & Devine & Middleton & Middleton & Middleton & \\
        Dumgal 2012 \#12 & Grant* & Grant & Grant & Grant & $\checkmark$ \\
        Dumgal 2022 \#3 & Murdoch & Murdoch & Murdoch & Murdoch & $\checkmark$ \\
        Fife 2012 \#16 & Crichton & Sloan & Sloan & Sloan & \\
        Borders 2012 \#9 & Wyse & Wyse & Wyse & Wyse & $\checkmark$ \\
        Aberdeenshire 2022 \#10 & Grant* & Payne & Payne & Payne & \\
        Angus 2022 \#3 & Clark & Devine & Devine & Devine & \\
        Glasgow 2017 \#21 & Kelly & TIE (Kelly) & - & - & \\
        N. Lanarkshire 2017 \#4 & Barclay* & Johnston & Johnston & Johnston & \\
        Renfrewshire 2012 \#7 & Caldwell & Caldwell & Caldwell & Caldwell & $\checkmark$ \\
        W. Dunbartonshire 2022 \#2 & Millar* & Millar & Millar & Millar & $\checkmark$ \\
        Glasgow 2007 Govan & Dornan & Flanagan & Flanagan & Flanagan & \\
        N. Ayrshire 2012 \#3 & Cullinane & TIE (Cullinane) & - & - & \\
        \bottomrule
    \end{tabular}
    \caption{BallotRank presents a stark contrast to three popular Condorcet-consistent methods because it leverages additional information about voters' preferences. A * indicates that BallotRank winners depend upon the value of $d$, with the listed victor winning at $d=1$.}
    \label{tab:no-condorcet}
\end{table}

To understand why BallotRank winners occasionally depend upon the value of $d$ used, we focus on the May 5, 2022 council election in Aberdeenshire's West Garioch (\#10) Ward. Here eight candidates, each from a different party or an independent, competed for three seats under STV rules. A top-cycle exists, with Moray Grant, Sam Payne, and incumbent Hazel Smith defeating the rest of the field while losing once within the cycle. With 1,247 and 1,150 first-preference votes, respectively, Grant and Payne surpassed the quota (1,060) and gained seats in round 1. Votes were then transferred repeatedly as candidates were eliminated, ultimately leaving Smith as the winner of the final seat in round 8. 

Under all three comparison methods, Payne is crowned the victor. BallotRank, however, selects Smith for values of $d \in (0, 0.85]$ and Grant at all higher values -- but never Payne. Why? At lower values of $d$, redistribution effects are of greater importance, while at higher values of $d$ the network structure predominates. In other words, at low $d$ it matters relatively more how a top-cycle candidate fares against out-of-cycle competitors, who receive redistributed BallotRank and then pass it along to top-cycle candidates. But at high $d$ little redistribution occurs, the top-cycle becomes a sink, and what matters most is how a top-cycle candidate fares relative to her within-cycle competitors. 

Considering total win margins in the top-cycle, Smith has 4,712, Grant comes second with 3,979, and Payne last with 3,198. Among the five out-of-cycle candidates, four of them lose by a larger margin to Smith than to either Grant or Payne. Payne, in fact, beats the out-of-cycle candidates by the smallest margin in four out of five cases. So at low $d$, where BallotRank contributions from out-of-cycle candidates matter, it should be no surprise that Smith wins. But at high $d$ what matters are the edge weights within the top-cycle. These, in turn, partially depend upon total loss margins: Grant - 9.23, Smith - 14.27, Payne - 70.19. Ultimately, Grant winds up with the largest self-loop weight (0.025), with Smith's close behind (0.019), and Payne's a distant third (0.003). At high $d$, these weights are determinative -- and Grant wins.

As a final point of comparison, we turn to convergence voting (CV) \cite{BJNR:ConvergenceVoting}, which begins from a similar premise to that which underlies BallotRank -- preference profiles can be represented as graphs from which PageRank can produce a single distribution of weights. An additional similarity lies in their supplementary self-loops in the adjacency matrix inputs. 

However, one key difference between these methods is that the CV authors use raw pairwise votes rather than the derived margins in the construction of their graphs. Put differently, if $a \succ b$ by 100 voters and $b \succ a$ by 25 voters, the CV approach would connect nodes $a$ and $b$ with two edges, $b \rightarrow a$ with weight 100 and $a \rightarrow b$ with weight 25. BallotRank's marginal approach, in contrast, would see a single edge $b \rightarrow a$ with weight 75.

In addition, CV utilizes self-loops with massively larger weights. In BallotRank, prior to column normalization the diagonal sums to one while the off-diagonals are integer vote margins, often in the hundreds or thousands. Under the CV approach, each initial diagonal entry equals the (number of voters) times (the number of candidates less one) minus (the sum of lost votes for that node). In their toy example with three candidates and five million voters, this means subtracting the sum of lost votes (ranging from two to four million per candidate) from 10 million, and adding initial self-loops with weights of six to eight million. Ultimately, this places relatively less emphasis on the observed network structure and relatively more emphasis upon the arbitrarily chosen self-loop weights. 

Finally, although CV differs in its construction of adjacency matrices, it is still the PageRank algorithm that does the heavy lifting -- and this means that the damping parameter is a key factor. However, the authors of \cite{BJNR:ConvergenceVoting} do not provide the analysis for various $d$ and just use $d=1$ (although this is not explicit in the paper). Moreover,  Section 6.2 of \cite{BJNR:ConvergenceVoting} discusses an example in which CV fails to select the Condorcet winner -- even at $d=1$. By contrast, BallotRank guarantees the selection of the Condorcet winner at $d=1$ when one exists. We believe this to be a desirable property, making BallotRank more preferable than CV.






\section{Social choice properties of BallotRank}\label{sec:socialchoice}

In this section, we test BallotRank against some standard criteria from social choice theory. Recall that $\mathcal V$ is the set of voters and  $\mathcal C$ the set of candidates; both sets are  finite. Let
\begin{itemize}
\item $\mathcal P$ be the set of non-empty subsets of $\mathcal C$;
    \item $\mathcal R$ be the set of all complete, transitive, strict orderings on $\mathcal C$; and
    \item $\mathcal W$ be the set of all weak orderings on $\mathcal C$.
\end{itemize}
An element of $\mathcal R^{|\mathcal V|}$ is called a \emph{(preference) profile}. In the setup of elections, a profile is simply the collection of all ranked ballots by all the voters. An element of $\mathcal W$ is an ordering of candidates but ties are allowed.

\begin{defin}\label{d:socialwelfare}
A \emph{social choice function} is a function
$$
\mathcal R^{|\mathcal V|}\longrightarrow \mathcal P.
$$
A \emph{social welfare function} (SWF) is a function
$$
\mathcal R^{|\mathcal V|}\longrightarrow \mathcal W.
$$
\end{defin}
A social choice function should be thought of as an aggregation rule that produces a winner or a subset of winners (thought of as tied for first place). A SWF does that and more, producing a ranking of all candidates with ties allowed.
BallotRank is clearly a SWF because it naturally produces a ranking of the candidates. All three comparison methods -- minimax, ranked pairs, and Shultze/beatpath -- are usually defined as social choice functions but can be extended to SWFs.\footnote{For minimax, order all candidates according to the margins of their worst pairwise defeat; for ranked pairs, use pairwise victories in descending order of strength to produce a ranking; for Schulze, look at all pairwise beatpaths and order the candidates according to beatpath strength.}

Although most of the social choice criteria we examine are usually stated for social choice functions, they are easily extended to SWFs. Some are discussed in \cite[Section 5]{BJNR:ConvergenceVoting} (Pareto, dictatorship, IIA); we add to this list and consider several others. What we find is that BallotRank does not entirely align with other popular Condorcet completion methods in which criteria it satisfies (Condorcet, anonymity, neutrality, majority, non-dictatorship, Smith, Pareto) or fails (IIA, monotonicity, later-no-harm, no-show, cloning). For example, minimax, ranked pairs, and Schulze satisfy monotonicity while BallotRank does not, and BallotRank satisfies the Condorcet loser and Smith criteria while minimax does not. Given the different outcomes already observed in real-world elections without Condorcet winners, the fact that BallotRank and these other methods satisfy differing social choice criteria ought to come as no surprise.

Below we list and informally define the criteria considered, followed by proofs or counterexamples. 


\begin{itemize}
\item \textbf{Anonymity:} The output of the SWF is invariant under permuted ballots.
\item \textbf{Neutrality:} If the order of two candidates is switched on every ranked ballot, their order is switched in the output of the SWF.
\item \textbf{Non-dictatorship:} There is no voter such that, for all profiles $\mathcal R^{|\mathcal V|}$ and all candidates $a, b$, when this voter prefers $a$ to $b$, the SWF always ranks $a$ above $b$.
\item \textbf{Majority:} If a candidate is ranked first by a majority of voters, then this candidate must be ranked first in the SWF output.
\item \textbf{Condorcet loser:} A Condorcet loser cannot be ranked first by the SWF.
\item \textbf{Pareto:} If all voters prefer $a$ to $b$, then the SWF does not rank $b$ above $a$.
\item \textbf{Smith:} The SWF always ranks every candidate in the Smith set -- the smallest nonempty subset of candidates such that every member defeats every nonmember -- above all those outside it.
\item \textbf{Independence of Irrelevant Alternatives (IIA):} If every voter has the same preferences between $a$ and $b$ across two profiles $\mathbf{R}$ and $\mathbf{R}'$, then the SWF must rank $a$ relative to $b$ identically under both profiles.  
\item \textbf{Monotonicity:} If a profile $\mathbf{R}$ is changed into profile $\mathbf{R}'$ by moving candidate $a$ up in some ranked ballots without changing any other relative orders, then the SWF must not rank $a$ lower in $\mathbf{R}'$ than in $\mathbf{R}$.
\item \textbf{Later-no-harm:} If the SWF declares $a$ to be the winner under profile $\mathbf{R}$, and a voter changes her ballot by ranking additional candidates below $a$, then $a$ should remain the winner.
\item \textbf{No-show/Participation:} If the SWF declares $a$ to be the winner and additional ballots ranking $a$ higher than $b$ are counted, then $b$ should not supplant the winner. 
\item \textbf{Cloning:} If the SWF declares $a$ to be the winner and we introduce one or more $a$ clones -- new candidates ranked adjacent to $a$ on every ballot -- then either $a$ or one of its clones should win.
\end{itemize}

\begin{prop}
    BallotRank satisfies anonymity, neutrality, and non-dictatorship for any $0<d\leq 1$ and majority for $d=1$.
\end{prop}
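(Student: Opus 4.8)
The plan is to dispatch the four claims essentially in order, using throughout that BallotRank depends on the profile only through the margin matrix $\boldsymbol{M}$, and that both the closed form \eqref{eq:solve} and the iteration \eqref{eq:power} are determined by $\boldsymbol{\ell}$ together with the uniform teleportation vector $\tfrac1n\mathbf{1}$. With this in hand, \emph{anonymity} is immediate: each margin $M(a,b)$ — and hence each diagonal $M(a,a)=TW(a)/TW$, each normalizer $TL(b)$, the matrix $\boldsymbol{\ell}$, the vector $\mathbf{BR}$, and the ranking it induces — is a function of the multiset of submitted ballots only, not of which voter cast which, so permuting ballots changes nothing, for every $0<d\le1$.

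For \emph{neutrality} I would let $\sigma$ be a permutation of $\mathcal C$ (the criterion only asks for transpositions, but nothing is gained by restricting) with permutation matrix $P_\sigma$. Relabeling candidates by $\sigma$ replaces $M(a,b)$ by $M(\sigma^{-1}a,\sigma^{-1}b)$, and a short check — using that the grand total $TW$ is permutation-invariant — shows the self-loop values $M(a,a)$ and the denominators $TL(b)$ relabel in step with the off-diagonal entries, so the new rescaled matrix is exactly $\boldsymbol{\ell}'=P_\sigma\,\boldsymbol{\ell}\,P_\sigma^{\top}$. Since $P_\sigma\mathbf{1}=\mathbf{1}$, the vector $P_\sigma\mathbf{BR}$ satisfies the fixed-point equation \eqref{eq:power} with $\boldsymbol{\ell}$ replaced by $\boldsymbol{\ell}'$; for $d<1$, uniqueness of the solution of \eqref{eq:power} gives $\mathbf{BR}'=P_\sigma\mathbf{BR}$, and for $d=1$ the same follows because the power iteration for $\boldsymbol{\ell}'$ is conjugate by $P_\sigma$ to the one for $\boldsymbol{\ell}$ while the uniform start $\tfrac1n\mathbf{1}$ is fixed by $P_\sigma$. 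Hence scores are merely relabeled by $\sigma$; taking $\sigma$ to be the transposition $(a\,b)$ swaps exactly the scores of $a$ and $b$, hence their positions in the output, which is the stated criterion.

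\emph{Non-dictatorship} then follows from anonymity alone, under the standing assumption $|\mathcal V|\ge2$ and $|\mathcal C|\ge2$. If a voter $v$ were a dictator, pick a profile $\mathbf R$ and candidates $a,b$ with $v$ preferring $a$ to $b$ but some other voter $w$ preferring $b$ to $a$; the dictator property forces the SWF to rank $a$ strictly above $b$. Swapping the ballots of $v$ and $w$ is a permutation of ballots, so anonymity leaves the output unchanged — but now $v$ prefers $b$ to $a$, so the dictator property forces $b$ strictly above $a$, a contradiction. For \emph{majority at $d=1$}: if $a$ is ranked first by more than half the voters, then for every $b\neq a$ each of those voters prefers $a$ to $b$, so more than half the electorate prefers $a$ to $b$ and $a\succ b$; thus $a$ is the Condorcet winner, and by the Condorcet consistency of BallotRank at $d=1$ established earlier (the unique-sink property of PageRank), BallotRank assigns $BR(a)=1$ and $BR(c)=0$ for every $c\neq a$, placing $a$ first.

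None of this is deep; the care is concentrated in two spots. First, in the neutrality step one must genuinely check that $M(a,a)$, $TW$, and $TL(b)$ transform the right way under a relabeling of candidates, so that the clean conjugation identity $\boldsymbol{\ell}'=P_\sigma\boldsymbol{\ell}P_\sigma^{\top}$ holds exactly. Second, the case $d=1$ must be argued via the power-iteration limit from the uniform start rather than via \eqref{eq:solve}, since $\mathbf{I}-d\boldsymbol{\ell}$ is singular there and the fixed-point equation need not single out a unique vector; this is the only real subtlety, and the step I would be most careful with.
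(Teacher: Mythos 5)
Your proposal is correct and follows essentially the same route as the paper: anonymity because BallotRank factors through the margin matrix, neutrality via permuting rows and columns of $\boldsymbol{\ell}$, non-dictatorship as a consequence of anonymity by the ballot-swapping argument, and majority by reducing to Condorcet consistency at $d=1$. You simply execute the neutrality step more carefully than the paper does (the explicit conjugation $\boldsymbol{\ell}'=P_\sigma\boldsymbol{\ell}P_\sigma^{\top}$ and the separate treatment of $d=1$ via the power iteration), which is a welcome tightening rather than a different approach.
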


\begin{proof}
    If two ballots are switched, the pairwise margins do not change. The method hinges upon pairwise margins, hence BallotRank satisfies anonymity.
    
    If candidates $a$ and $b$ are switched on every ballot, that has the effect of swapping the corresponding rows and columns of the matrix $\boldsymbol{\ell}$. This means that the corresponding entries of vector $\mathbf{BR}$ are also swapped; this is precisely what neutrality dictates.
    
    It is a general fact that a SWF which satisfies anonymity must also satisfy non-dictatorship. If a dictator existed, then swapping ballots with that voter would change the winner when the new ballot had a different candidate at the top, but anonymity says this cannot happen. 
    
    It is another general fact that any Condorcet completion method (which BallotRank is for $d=1$) satisfies the majority criterion. Namely, if a candidate is ranked first by a majority of agents, then this candidate beats all other candidates in head-to-head contests and is by definition the Condorcet winner; the majority criterion follows. For $d<1$ the majority criterion is not guaranteed to be satisfied because of teleportation (just like BallotRank is not guaranteed to always find the Condorcet winner when one exists), but this is unlikely to happen in practice. 
\end{proof}

\begin{prop}
    BallotRank satisfies the Condorcet loser criterion for all $0<d\leq 1$.
\end{prop}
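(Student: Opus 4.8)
The plan is to pin down the BallotRank value of a Condorcet loser exactly and then exhibit a candidate that strictly exceeds it. Let $a$ be a Condorcet loser, so $W(a)=\emptyset$; equivalently $M(a,b)=0$ for every $b\neq a$. Then also $TW(a)=\sum_{b\neq a}M(a,b)=0$, so the diagonal entry $M(a,a)=TW(a)/TW=0$ vanishes as well. Hence the entire $a$-th row of $\boldsymbol{\ell}$ is zero: $\ell(a,b)=M(a,b)/TL(b)=0$ for all $b\in\mathcal C$. Reading off the $a$-th coordinate of the fixed-point equation \eqref{eq:power} (equivalently \eqref{eq:diagweighted}) then immediately gives $BR(a)=\tfrac{1-d}{n}$.

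Next I would record two standard facts about the vector $\mathbf{BR}$. First, since $\boldsymbol{\ell}$ is column-stochastic and $\mathbf{BR}$ is the stationary distribution produced by the power method started at $\tfrac1n\mathbf{1}$, it is a probability vector: $BR(c)\geq 0$ for all $c$ and $\sum_{c\in\mathcal C}BR(c)=1$. For $d<1$ one can see $\sum_c BR(c)=1$ directly by summing the $n$ coordinates of \eqref{eq:power} and using column-stochasticity of $\boldsymbol{\ell}$, which forces $S=(1-d)+dS$ for $S=\sum_c BR(c)$; the case $d=1$, where this identity degenerates, is where one must instead invoke the power-method interpretation, and this is the only slightly delicate point in the argument. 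Second, because every entry of $\boldsymbol{\ell}$ and of $\mathbf{BR}$ is nonnegative, each coordinate of \eqref{eq:power} yields $BR(c)\geq\tfrac{1-d}{n}$ for every candidate $c$.

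Finally I would combine the pieces. Assuming as always that $n\geq 2$, if we had $BR(c)=\tfrac{1-d}{n}$ for every $c$, then $\sum_c BR(c)=1-d<1$ (using $d>0$), contradicting $\sum_c BR(c)=1$. Hence some candidate $c$ satisfies $BR(c)>\tfrac{1-d}{n}=BR(a)$, so $a$ is not ranked first by BallotRank, which is precisely the Condorcet loser criterion. I do not anticipate any real obstacle beyond the bookkeeping noted above: the whole proof rests on the single observation that a Condorcet loser contributes a zero row to $\boldsymbol{\ell}$, which freezes its BallotRank at the teleportation floor $\tfrac{1-d}{n}$, and normalization then guarantees someone else sits above that floor.
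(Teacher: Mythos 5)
Your proof is correct and takes essentially the same route as the paper's: both rest on the observation that the Condorcet loser's row of $\boldsymbol{\ell}$ vanishes, pinning its BallotRank at the teleportation floor $\tfrac{1-d}{n}$, and then use the normalization $\sum_c BR(c)=1$ to force some other candidate strictly above it. You are in fact somewhat more careful than the paper, explicitly checking that the diagonal entry $M(a,a)=0$ and flagging the degenerate $d=1$ case, but these are refinements of the same argument rather than a different approach.
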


\begin{proof}
    If $a$ is the Condorcet loser, then row $a$ of the BallotRank matrix $\boldsymbol{\ell}$ contains only zeros. This means that $a$ is giving all of its weight to other nodes and retains none of it. The only weight it could have is due to teleportation, and the maximum ranking it could obtain in that case is $(1-d)/n$ because the second summand in Equation (\ref{eq:diagweighted}) is always zero. But since $d$ is positive, this number is less than $1/n$. With $n$ candidates, it follows that another candidate's score must be greater than $1/n$,  which in turn means that $a$ cannot be the winner. As a special case, for $d=1$, BallotRank assigns $a$ the value of $0$.  
\end{proof}

\begin{prop}
    BallotRank satisfies the Pareto criterion for all $0<d\leq 1$.
\end{prop}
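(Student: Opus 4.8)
The Pareto criterion here amounts to showing $BR(a)\ge BR(b)$ whenever every voter ranks $a$ above $b$, since then the weak order output by BallotRank cannot place $b$ strictly above $a$. The plan is to subtract the two instances of the defining fixed-point equation~\eqref{eq:diagweighted} for $BR(a)$ and $BR(b)$. The teleportation constants $\frac{1-d}{n}$ cancel, giving
\[
BR(a)-BR(b)\;=\;d\sum_{c\in\mathcal C}\frac{M(a,c)-M(b,c)}{TL(c)}\,BR(c),
\]
and it suffices to check that every summand is nonnegative. We have $TL(c)>0$ for each $c$ (under the standing assumption that the graph is connected, $c$ either loses to some candidate or carries a positive self-loop $M(c,c)=TW(c)/TW$), and $BR(c)\ge 0$ for each $c$ (iterating~\eqref{eq:power} from $\tfrac1n\mathbf 1$ preserves nonnegativity of all entries and keeps their sum equal to $1$, so the fixed point $\mathbf{BR}$ is a probability vector; the same holds at $d=1$ since $\boldsymbol{\ell}$ is column-stochastic). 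Hence everything reduces to proving $M(a,c)\ge M(b,c)$ for all $c\in\mathcal C$.

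For $c\notin\{a,b\}$ this is the one substantive step. Write $V_{xy}$ for the set of voters ranking $x$ above $y$. Any voter in $V_{bc}$ ranks $b$ above $c$ and, by hypothesis, $a$ above $b$; transitivity of that voter's strict, total ballot forces $a$ above $c$, so $V_{bc}\subseteq V_{ac}$ and $|V_{ac}|\ge|V_{bc}|$. Since ballots admit no ties, $|V_{xy}|+|V_{yx}|=|\mathcal V|$, so $M(x,y)=\max\!\big(2|V_{xy}|-|\mathcal V|,\,0\big)$, and monotonicity of $\max$ in its first argument yields $M(a,c)\ge M(b,c)$. For the two self-loop terms: when $c=b$, unanimity gives $M(a,b)=|\mathcal V|$ while $M(b,b)=TW(b)/TW<1\le|\mathcal V|$, so $M(a,b)-M(b,b)>0$; when $c=a$, we have $M(a,a)=TW(a)/TW>0$ (as $a$ beats $b$, so $TW(a)>0$) and $M(b,a)=0$. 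Every summand is therefore nonnegative, so $BR(a)\ge BR(b)$.

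The proof is short, so I do not anticipate a serious obstacle; the step needing the most care is the transitivity comparison $M(a,c)\ge M(b,c)$, and the bookkeeping one must not skip is the treatment of the two self-loop terms ($c=a$ and $c=b$) together with the positivity of the denominators $TL(c)$. The $d=1$ case needs only the observation that the displayed subtraction identity holds verbatim once the teleportation term is dropped, applied to the stationary vector produced by the power method; in particular, if a Condorcet winner exists then $b$ is not it (it is beaten by $a$), so $BR(b)=0$ and the inequality is immediate.
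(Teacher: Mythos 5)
Your proof is correct and follows essentially the same route as the paper's: both reduce Pareto to the componentwise domination $M(a,c)\ge M(b,c)$ for all $c$, obtained from the ballot-level inclusion $V_{bc}\subseteq V_{ac}$ via transitivity of individual ballots, and then pair this with the nonnegativity of the BallotRank vector. If anything, your version is slightly more complete, since you verify the two diagonal summands $c=a$ and $c=b$ explicitly and justify $TL(c)>0$, whereas the paper proves the self-loop-free case in detail and only informally sketches why adding the self-loops does not disturb the conclusion.
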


\begin{proof} 
    We will first argue the case with no self-loops. In that situation, recall that weighted PageRank is given by Equation \eqref{eq:weighteddamped}:$$ \widetilde{PR}(a) = \frac{1 - d}{n} + d \sum_{b\in \mathcal C \, a\neq b} \frac{M(a,b)}{\widetilde{TL}(b)}\widetilde{PR}(b), $$

    where $M(a,b)$, $a\neq b$, is the margin by which $a$ beats $b$ (zero if $a$ does not beat $b$), and $\widetilde{TL}(b)$ is the total loss margin of $b$.

    Consider the columns of the matrix $\boldsymbol{\tilde{\ell}}=(\tilde{\ell}(a,b))$ defined by $\tilde{\ell}(a,b)=\frac{M(a,b)}{\widetilde{TL}(b)}$. By assumption $a$ always beats $b$, so $M(a,b)>0$ and $M(b,a)=0$. Therefore $$ \tilde{\ell}(a,b)=\frac{M(a,b)}{\widetilde{TL}(b)} > 0, \ \ \tilde{\ell}(b,b)=0. $$

    Now fix any candidate $c\notin\{a,b\}$. By assumption, $a$ beats $b$ for every voter. If $b$ beats $c$, then transitivity of individual preferences implies that $a$ also beats $c$. Hence every ballot that contributes to $M(b,c)$ also contributes to $M(a,c)$, and we have $$ M(a,c)\geq M(b,c). $$ 

    Consequently $$ \tilde{\ell}(a,c) = \frac{M(a,c)}{\widetilde{TL}(c)} \geq \frac{M(b,c)}{\widetilde{TL}(c)} = \tilde{\ell}(b,c). $$

    Finally, in column $a$ we have $\tilde{\ell}(b,a)=0$, since unanimity implies that $b$ never beats $a$.

    Putting this together, we see that the row of $\boldsymbol{\tilde{\ell}}$ corresponding to $a$ dominates the row corresponding to $b$ componentwise, with strict inequality in column $b$. That is, $$\tilde{\ell}(a,c) \geq \tilde{\ell}(b,c)\ \  \text{for all $c$ and}\ \ \tilde{\ell}(a,b)>\tilde{\ell}(b,b).$$

    Multiplying both sides by the nonnegative BallotRank vector $\widetilde{\mathbf{BR}}=(\widetilde{BR}(c))$ gives \[ \widetilde{BR}(a) = \frac{1 - d}{n} + d \sum_{c}\tilde{\ell}(a,c)\,\widetilde{BR}(c) \geq \frac{1 - d}{n} + d \sum_{c}\tilde{\ell}(b,c)\,\widetilde{BR}(c) = \widetilde{BR}(b). \] The inequality is strict unless $\widetilde{BR}(b)=0$ (equality could happen if $d=1$ and there is a Condorcet winner which is neither $a$ nor $b$, so BallotRank for both $a$ and $b$ is zero). 

    Adding self-loops, namely passing from $\boldsymbol{\tilde{\ell}}$ to $\boldsymbol{\ell}$ does not change the argument significantly. Each outcome now keeps some share of its own BallotRank (in proportion to the total margins of the victories it has over others). However, the unanimous preference for $a$ over $b$ still guarantees that $b$ must send a portion of its weight directly to $a$, while $a$ never sends any weight to $b$. Moreover, in every other pairwise contest, the edges going out of $b$ are also weakly dominated by the edges going out of $a$.  
    
    Thus the columns of $\boldsymbol{\ell}$ continue to favor $a$ at least as much as $b$ (and in column $b$ there is still a strict advantage for $a$).  The effect of self-loops is only to alter how much weight each candidate retains overall; the conclusion that $a$ must rank at least as high as $b$ is unchanged.
\end{proof}

\begin{prop}
    BallotRank satisfies the Smith criterion.
\end{prop}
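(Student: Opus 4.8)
The plan is to prove the Smith criterion in the clean regime $d=1$, where $\mathbf{BR}$ is (via the power method, since \eqref{eq:solve} degenerates there) the stationary distribution of the Markov chain with column‑stochastic transition matrix $\boldsymbol{\ell}$. Everything rests on one observation: the Smith set $S$ is an absorbing (closed) class for this chain. If $b\in S$ and $a\notin S$ with $a\neq b$, then by definition of the Smith set $b$ beats $a$, so $a$ does not beat $b$ and $\ell(a,b)=M(a,b)/TL(b)=0$; since the diagonal entry $\ell(b,b)$ also lies in $S$, the column $b$ of $\boldsymbol{\ell}$ has all of its mass in rows indexed by $S$, hence $\sum_{a\in S}\ell(a,b)=\sum_{a\in\mathcal C}\ell(a,b)=1$. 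In Markov‑chain terms, $\boldsymbol{\ell}$ restricted to $S$ is again column‑stochastic and no weight ever escapes $S$. Dually, for every $c\notin S$ each member of $S$ beats $c$, so $\sum_{a\in S}\ell(a,c)=\bigl(\sum_{a\in S}M(a,c)\bigr)/TL(c)>0$: from every node outside $S$ there is positive one‑step probability of entering $S$, so all states outside $S$ are transient.

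The remaining steps are: (1) show $\boldsymbol{\ell}|_S$ is irreducible; (2) show it is aperiodic, hence primitive; (3) conclude. For (1), this is exactly where the \emph{minimality} of the Smith set is used: if some nonempty $T\subsetneq S$ were closed under $\boldsymbol{\ell}|_S$, then $\ell(a,b)=0$ for all $b\in T$ and $a\in S\setminus T$, i.e.\ no member of $S\setminus T$ beats a member of $T$; since the pairwise relation is a tournament (no ties, as assumed), every member of $T$ then beats every member of $S\setminus T$, and of course every candidate outside $S$, so $T$ is a dominating set strictly smaller than $S$ — contradiction. For (2), every $a\in S$ beats at least one candidate (a non‑member if $S\neq\mathcal C$; and if $S=\mathcal C$ there is no Condorcet loser, since a Condorcet loser would make $\mathcal C$ minus that loser a smaller dominating set), hence $TW(a)>0$ and $\ell(a,a)=M(a,a)/TL(a)>0$. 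Thus $\boldsymbol{\ell}|_S$ is primitive, with a unique stationary distribution $\pi$ satisfying $\pi(a)>0$ for all $a\in S$. For (3): $\boldsymbol{\ell}$ has a single recurrent class, namely $S$, so its unique stationary distribution is $\pi$ extended by zero off $S$, and the power‑method iteration of \eqref{eq:power} at $d=1$ converges to it; therefore $BR(a)=\pi(a)>0=BR(c)$ for all $a\in S$ and all $c\notin S$, which is precisely the Smith criterion. (When $|S|=1$ this recovers the already‑noted fact that a Condorcet winner absorbs all BallotRank.)

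For $0<d<1$, I would claim only the continuity version: $\mathbf{BR}$ depends continuously on $d$ through \eqref{eq:solve}, and the $d=1$ values on $S$ are strictly separated from the zero values off $S$, so the same strict inequalities persist for $d$ close enough to $1$. I do not expect the Smith criterion to survive for all $0<d<1$: for small $d$, $BR(a)\approx\frac{1-d}{n}\bigl(1+d\,r_a\bigr)$ with $r_a=\sum_b\ell(a,b)$ the row sum of $\boldsymbol{\ell}$, and a profile consisting of a three‑candidate top cycle with small internal margins together with a long chain of mutually comparable ``also‑rans'' with large internal margins makes some non‑member's row sum exceed that of a cycle member, violating the criterion. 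This parallels the majority criterion, which was likewise guaranteed only at $d=1$.

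So the genuine content, and the main obstacle, is step (1): deducing that $\boldsymbol{\ell}|_S$ is irreducible from the minimality built into the definition of the Smith set (and, if one wants full rigor in the presence of aggregate ties, noting that ties between a member and a non‑member cannot occur — automatic — and re‑running the closed‑subset argument when ties occur inside $S$). Everything else is bookkeeping with stochastic matrices, and the honest final statement is the $d=1$ version supplemented by the near‑$1$ continuity remark.
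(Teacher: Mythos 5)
Your proof is correct and, at the level of the underlying idea, matches the paper's: all BallotRank mass flows into the Smith set and none flows out. But the paper's own proof is a two-sentence sketch (``the weight is distributed from $\mathcal C\setminus S$ to $S$ and so eventually $BR(s)>BR(a)$''), whereas you supply the actual mathematical content needed to make that sketch into a theorem. In particular, you correctly identify that at $d=1$ the claim is a statement about the stationary distribution of the chain $\boldsymbol{\ell}$, and you prove the three facts the paper leaves implicit: (i) $S$ is a closed class and every state outside $S$ is transient (because each non-member sends positive one-step mass into $S$); (ii) $\boldsymbol{\ell}|_S$ is irreducible, which is the one place where the \emph{minimality} of the Smith set is actually used --- a closed proper subset $T\subsetneq S$ would itself be a dominating set; and (iii) $\boldsymbol{\ell}|_S$ is aperiodic because every Smith member has a strictly positive self-loop, so the stationary distribution is unique and strictly positive on $S$. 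None of these appear in the paper, and (ii) in particular is not a triviality: without minimality the restricted chain could decompose further and the ``all of $S$ ranks above all of $\mathcal C\setminus S$'' conclusion would need a separate argument. Your handling of the residual tie issue inside $S$ is also more careful than the paper's.

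The one substantive divergence is scope. The paper states the proposition without restricting $d$, and its proof ignores teleportation entirely; your version honestly proves the criterion only at $d=1$ (where non-members get BallotRank exactly $0$) plus a continuity window near $d=1$, and you sketch a plausible mechanism by which the criterion fails for small $d$ (the first-order expansion $BR(a)\approx\frac{1-d}{n}(1+d\,r_a)$ in terms of row sums of $\boldsymbol{\ell}$, which minimality of $S$ does not control). This is a legitimate criticism of the paper's unqualified claim rather than a gap in your argument: the paper's other propositions carefully annotate which values of $d$ they cover, and this one should too. If you wanted to close the loop you would either exhibit an explicit small-$d$ counterexample or prove the criterion for all $d\in(0,1]$, but as a proof of the $d=1$ statement your argument is complete and strictly more rigorous than the paper's.
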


\begin{proof}
    Let $S$ be the Smith set. Then, for any $s\in S$ and $a\in \mathcal C\setminus S$, $s$ beats $a$ and there is therefore a weighted edge from $a$ to $s$. Consequently the weight is distributed from $\mathcal C\setminus S$ to $S$ and so eventually $BR(s)>BR(a)$. This means that all outcomes in the Smith set are ranked above all outcomes in the complement. Adding weighted self-loops does not change this argument since this does not affect the weight flow in one direction, toward the Smith set.
\end{proof}

We devote the remainder of this section to examples of BallotRank failing various criteria. To begin, by Arrow's Theorem \cite{Arrow1951}, we know that BallotRank must fail IIA because it satisfies anonymity, neutrality, and non-dictatorship (and we are assuming unrestricted domain). The following example illustrates.

\begin{example} \textit{BallotRank fails IIA.} 

Recall the four-candidate example in Section \ref{sec:example} where BallotRank ranked $a\succ b\succ d\succ c$. Now consider a new profile where three voters rank $a\succ b\succ c\succ d$, five voters rank $c\succ a\succ b\succ d$, and four voters rank $b\succ c\succ a\succ d$. Note that, as in the original profile, the first eight voters rank $a$ over $b$ and the last four rank $b$ over $a$. The BallotRank matrix is now 
$$ \boldsymbol{\ell}= \begin{pmatrix}
\frac{1}{19} & \frac{96}{103} &  0  &  \frac{1}{3}\\[4pt]
0 & \frac{7}{103} & \frac{16}{19} &  \frac{1}{3}\\[4pt]
\frac{18}{19} & 0 & \frac{3}{19}  &  \frac{1}{3}\\[2pt]
0 & 0 & 0 & 0
\end{pmatrix} $$

Noting the existence of a Condorcet cycle $a\succ b\succ c\succ a$, the margin and BallotRank graphs are now
\begin{center}
\begin{tikzpicture}[node distance=3cm, baseline=(current bounding box.center)]
  \node[mynode] (a) {$a$};
  \node[mynode, right of=a] (b) {$b$};
  \node[mynode, below of=a] (d) {$d$};
  \node[mynode, below of=b] (c) {$c$};

  \draw[myarrow] (b) -- (a) node[midway, above] {$4$};
  \draw[myarrow] (c) -- (b) node[pos=0.5, right, yshift=4pt] {$2$}; 
  \draw[myarrow] (a) -- (c) node[pos=0.2, right, yshift=4pt] {$6$};
  \draw[myarrow] (d) -- (a) node[pos=0.5, left, yshift=4pt] {$12$}; 
  \draw[myarrow] (d) -- (b) node[pos=0.8, left, yshift=4pt] {$12$};
  \draw[myarrow] (d) -- (c) node[midway, above] {$12$};
\end{tikzpicture}
\ \ \ \ \ \ \ \ 
\begin{tikzpicture}[node distance=3cm, baseline=(current bounding box.center)]
  \node[mynode] (a) {$a$};
  \node[mynode, right of=a] (b) {$b$};
  \node[mynode, below of=a] (d) {$d$};
  \node[mynode, below of=b] (c) {$c$};

  \draw[myarrow] (b) -- (a) node[midway, above] {$\tfrac{96}{103}$};
  \draw[myarrow] (c) -- (b) node[midway, right] {$\tfrac{16}{19}$};
  \draw[myarrow] (a) -- (c) node[pos=0.2, right, yshift=4pt] {$\tfrac{18}{19}$};    
  \draw[myarrow] (d) -- (a) node[midway, left] {$\tfrac{1}{3}$};
  \draw[myarrow] (d) -- (b) node[pos=0.8, left, yshift=4pt] {$\tfrac{1}{3}$}; 
  \draw[myarrow] (d) -- (c) node[midway, above] {$\tfrac{1}{3}$};

  \draw[myarrow] (a) edge[loop left,  min distance=10mm, in=150, out=210] node {$\tfrac{1}{19}$} (a);
  \draw[myarrow] (b) edge[loop right, min distance=10mm, in=30, out=330] node {$\tfrac{7}{103}$} (b);
  \draw[myarrow] (c) edge[loop right, min distance=10mm, in=330, out=30] node {$\tfrac{3}{19}$} (c);
  \draw[myarrow] (d) edge[loop left,  min distance=10mm, in=210, out=150] node {$0$} (d);
\end{tikzpicture}
\end{center}

BallotRank now yields
\begin{center}
\begin{tabular}{l|cccc}
\toprule
               & $a$ & $b$ & $c$ &  $d$ \\
\midrule
$d=0.85$ & 0.3086 & 0.3113 & 0.3426 & 0.0375 \\
$d=1.00$ & 0.3183 & 0.3236 & 0.3581 & 0.0000 \\
\bottomrule
\end{tabular}
\end{center}
Even though the relative preferences for $a$ and $b$ remain the same as in the original example, BallotRank has switched the ranking of these two candidates under both damping parameter values, so that the output is now $c\succ b\succ a\succ d$. This illustrates the failure of IIA.\footnote{The second row of rankings also demonstrates that BallotRank resolves Condorcet cycles at $d=1$. Since candidate $d$ is the Condorcet loser, BallotRank appropriately assigns it a value of $0$.}
\end{example}

\begin{example} \textit{BallotRank fails monotonicity.}
    
Suppose the 12 voters in our four-candidate race submitted the following ranked ballots:


\begin{center}
\begin{tabular}{ccccc}
$b\succ a\succ d\succ c$  & & $d\succ b\succ c\succ a$  & & $a\succ c\succ d\succ b$ \\
$b\succ a\succ d\succ c$  & &  $d\succ b\succ c\succ a$ & & $b\succ a\succ c\succ d$ \\
$c\succ b\succ a\succ d$  & & $d\succ c\succ a\succ b$  & & $c\succ a\succ d\succ b$ \\
$c\succ b\succ a\succ d$   & & $d\succ c\succ a\succ b$  & &  $d\succ c\succ b\succ a$\\
\end{tabular}
\end{center}

Now consider another profile in which only two ballots differ, in each case because candidate $a$ is now ranked higher:
\begin{align*}
c\succ a\succ d\succ b\ &\longrightarrow\ a\succ c\succ d\succ b;\\
d\succ b\succ c\succ a\ &\longrightarrow\ d\succ b\succ a\succ c.
\end{align*}
The margin matrices are
\[
\begin{pmatrix}
0 & 0 & 0 & 2\\
4 & 0 & 0 & 0\\
4 & 2 & 0 & 0\\
0 & 2 & 2 & 0
\end{pmatrix}
\ \ \ \ \text{and} \ \ \ \ 
\begin{pmatrix}
0 & 0 & 0 & 2\\
4 & 0 & 0 & 0\\
0 & 2 & 0 & 0\\
0 & 2 & 2 & 0
\end{pmatrix},
\]

while the corresponding BallotRank matrices are

\[
\begin{pmatrix}
\frac{1}{65} & 0 & 0 & \frac{8}{9}\\[4pt]
\frac{32}{65} & \frac{1}{17} & 0 & 0\\[4pt]
\frac{32}{65} & \frac{8}{17} & \frac{3}{19} & 0\\[4pt]
0 & \frac{8}{17} & \frac{16}{19} & \frac{1}{9}
\end{pmatrix}
\ \ \ \ \text{and} \ \ \ \ 
\begin{pmatrix}
\frac{1}{25} & 0 & 0 & \frac{6}{7}\\[4pt]
\frac{24}{25} & \frac{1}{13} & 0 & 0\\[4pt]
0 & \frac{6}{13} & \frac{1}{13} & 0\\[4pt]
0 & \frac{6}{13} & \frac{12}{13} & \frac{1}{7}
\end{pmatrix}.
\]

For the left matrix, BallotRank gives
\begin{center}
\begin{tabular}{l|cccc}
\toprule
               & $a$ & $b$ & $c$ &  $d$ \\
\midrule
$d=0.85$ & 0.2763 & 0.1612 & 0.2513 & 0.3112 \\
\bottomrule
\end{tabular}
\end{center}
and for the right it gives
\begin{center}
\begin{tabular}{l|cccc}
\toprule
               & $a$ & $b$ & $c$ &  $d$ \\
\midrule
$d=0.85$ & 0.2677 & 0.2738 & 0.1551 & 0.3034 \\
\bottomrule
\end{tabular}
\end{center}
The first profile thus produces the ranking $d\succ a\succ c\succ b$ and the second produces $d\succ b\succ a\succ c$.
So even though $a$ does better in the second profile, BallotRank gives it a worse ranking (and this in fact happens for all $d$). This illustrates the failure of monotonicity. 

Heuristically, this is a consequence of PageRank’s non-locality; the method is defined in terms of the global balancing of flows -- improving one candidate can change the proportional redistribution in ways that indirectly strengthen a competitor. Another way of putting it is that higher-order network effects can overwhelm the small, local boost to $a$.

Note that this example also illustrates the failure of IIA since the relative ranking of $b$ and $c$ did not change from one profile to the other, yet their position in the outputs changed.
\end{example}

\begin{example} \textit{BallotRank fails later-no-harm.}

To see why BallotRank fails the later-no-harm criterion, say there are three candidates, $a$, $b$, and $c$, and four voters with ballots 
$$
a\succ b\succ c, \ \ 
c\succ a\succ b, \ \ 
b\succ c, \ \ 
c\succ a.
$$
(There are two partial ballots here, which are technically not allowed since we are assuming that the domain of our social welfare functions is $\mathcal R^{|\mathcal V|}$, and $\mathcal R$ contains complete orderings. However, it is easy to extend the definition to partial orderings; we will not bother with the details.)

The BallotRank matrix is
\[
\boldsymbol{\ell} = 
\begin{pmatrix}
\frac{1}{3} & \frac{8}{9} & 0\\[4pt]
0 & \frac{1}{9} & \frac{4}{5}\\[4pt]
\frac{2}{3} & 0 & \frac{1}{5}\\[4pt]
\end{pmatrix}
\]
and $a$ wins with the rankings $a=0.387$, $b=0.290$, and $c=0.323$ when the damping parameter is set to one.

But now suppose the voter who had the partial ranking 
$c\succ a$ decides to extend their ballot to $c\succ a\succ b$. The matrix is 
\[
\boldsymbol{\ell'} = 
\begin{pmatrix}
\frac{3}{7} & 1 & 0\\[4pt]
0 & 0 & 0\\[4pt]
\frac{4}{7} & 0 & 1\\[4pt]
\end{pmatrix}
\]
with the rankings $c\succ a\succ b$ (for all values of $d$). Now $c$ is the winner even though the voter only added rankings below $a$.

The intuition is that the added $a\succ b$ comparison increases how much ``flow'' cycles between $b$ and $c$ and how much of that flow avoids returning to $a$. Because BallotRank redistributes by margins with self-loops, seemingly harmless later preferences can re-route centrality in a way that hurts the earlier favorite.
\end{example}

\begin{example} \textit{BallotRank fails no-show.}

Suppose an election produced the following ballots, where the parenthetical number denotes multiplicity:
\begin{align*}
 a \succ b \succ c \ \ (9) \\
b \succ a \succ c \ \ (1) \\
b \succ c \succ a \ \ (5) \\
c \succ a \succ b \ \ (1) \\
c \succ b \succ a \ \ (3)
\end{align*}
Since $a$ beats both $b$ and $c$ by one vote each, it is the Condorcet winner and BallotRank winner as well. (And $b$ beats $c$ by 11 votes.)

Now add two ballots of type $c \succ a \succ b$ (and note that $a$ is ranked above $b$ in both). The margins matrix, with diagonals, and the corresponding BallotRank matrix are
$$
\textbf{M}=   
\begin{pmatrix}
\frac{3}{13} & 3 & 0\\[4pt]
0 & \frac{9}{13} & 9\\[4pt]
1 & 0 & \frac{1}{13}
\end{pmatrix}, 
\ \ \ \ \ \ \ 
\boldsymbol{\ell}=
\begin{pmatrix}
\frac{3}{16} & \frac{13}{16} & 0\\[4pt]
0 & \frac{3}{16} & \frac{117}{118}\\[4pt]
\frac{13}{16} & 0 & \frac{1}{118}
\end{pmatrix}
$$
BallotRank gives $b\succ a\succ c$ (for all $d<1$). Since $b$ is the new winner, this shows the failure of the no-show criterion. 
\end{example} 
   
\begin{example} \textit{BallotRank fails cloning.}

Consider the two profiles presented below, where the right column was obtained from the left column by introducing a clone of candidate $a$. The numbers in parentheses again mean multiplicity:
\begin{align*}
a\succ b\succ c\succ d\ \ (4)  & \qquad 
 a'\succ a\succ b\succ c\succ d\ \ (4) \\
a\succ c\succ b\succ d\ \ (1)  & \qquad 
 a'\succ a\succ c\succ b\succ d\ \ (1)  \\
b\succ c\succ d\succ a\ \ (3)  &  \qquad 
b\succ c\succ d\succ  a'\succ a\ \ (3) \\
c\succ b\succ a\succ d\ \ (2)  &\qquad 
c\succ b\succ  a'\succ a\succ d\ \ (2)  \\
c\succ d\succ a\succ b\ \ (4)  & \qquad 
c\succ d\succ  a'\succ a\succ b\ \ (4)  \\
d\succ a\succ b\succ c\ \ (4)  & \qquad 
d\succ  a'\succ a\succ b\succ c\ \ (4) \\
d\succ b\succ c\succ a\ \ (1)  & \qquad 
d\succ b\succ c\succ  a'\succ a\ \ (1)
\end{align*}

The corresponding BallotRank matrices are
$$
\begin{pmatrix}
\frac{1}{25}&\frac{98}{101}&0&0\\[4pt]
0&\frac{3}{101}&\frac{14}{15}&\frac{28}{285}\\[4pt]
\frac{4}{25}&0&\frac{1}{15}&\frac{252}{285}\\[4pt]
\frac{20}{25}&0&0&\frac{5}{285}
\end{pmatrix}
\ \ \ \ 
\begin{pmatrix}
\frac{7}{1507}&0&\frac{70}{141}&0&0\\[4pt]
\frac{1140}{1507}&\frac{13}{193}&\frac{70}{141}&0&0\\[4pt]
0&0&\frac{1}{141}&\frac{300}{311}&\frac{6}{61}\\[4pt]
\frac{60}{1507}&\frac{30}{193}&0&\frac{11}{311}&\frac{54}{61}\\[4pt]
\frac{300}{1507}&\frac{150}{193}&0&0&\frac{1}{61}
\end{pmatrix}.
$$

For the first, BallotRank ($d=1$) gives $a$ as the winner ($a=0.2669 \succ b=0.2641$), while for the second, the winner is $b$ ($b=0.2369 \succ a'=0.2220 \succ a=0.1182$). This shows BallotRank's failure of the cloning criterion.
\end{example} 

\begin{table}
\begin{center}
\begin{tabular}{r|ccccccc}
    \toprule
    Method & Anonymity & Neutrality & Majority & ND & Pareto & C Loser & Smith \\
    \midrule
    BallotRank   & \checkmark & \checkmark & \checkmark & \checkmark & \checkmark & \checkmark & \checkmark \\
    Minimax      & \checkmark & \checkmark & \checkmark & \checkmark & \checkmark & \xmark & \xmark \\
    Ranked Pairs & \checkmark & \checkmark & \checkmark & \checkmark & \checkmark & \checkmark & \checkmark \\
    Schulze      & \checkmark & \checkmark & \checkmark & \checkmark & \checkmark & \checkmark & \checkmark \\
    \bottomrule
\end{tabular}
\caption{Most comparison methods pass the criteria that BallotRank passes.}
\label{tab:passes}
\end{center}
\end{table}

\begin{table}
\begin{center}
\begin{tabular}{r|ccccc}
    \toprule
    Method & Clones & Monotone & No-Show & LNH & IIA \\
    \midrule
    BallotRank   & \xmark     & \xmark     & \xmark & \xmark & \xmark \\
    Minimax      & \xmark     & \checkmark & \xmark & \xmark & \xmark \\
    Ranked Pairs & \checkmark & \checkmark & \xmark & \xmark & \xmark \\
    Schulze      & \checkmark & \checkmark & \xmark & \xmark & \xmark \\
    \bottomrule
\end{tabular}
\caption{Failing cloning and monotonicity sets BallotRank apart.}
\label{tab:fails}
\end{center}
\end{table}

Tables \ref{tab:passes} and \ref{tab:fails} succinctly summarize BallotRank's performance along these social choice criteria and contrast it with minimax, ranked pairs, and Schulze/beatpath.

\section{Conclusions and further directions}\label{sec:conclusion}

Building upon insights from \cite{mcgarvey1953theorem} and \cite{bp1998cn}, and sharing aspects of \cite{BJNR:ConvergenceVoting}, we introduced BallotRank, a Condorcet completion method. Like McGarvey (and his advisor Kenneth May), we harness an equivalence between preference profiles and directed graphs. Unlike McGarvey, however, we leverage additional information in the form of pairwise margins. Like PageRank, BallotRank can utilize a damping parameter $d$ to rank candidates in the presence of Condorcet winners and losers. Unlike PageRank, however, BallotRank with its self-loops manages to make finer distinctions amongst the members of Condorcet cycles. Like convergence voting, BallotRank recognizes the problem these cycles present to the task of ranking candidates and addresses it via self-loops. Unlike convergence voting, however, BallotRank constructs low-weight self-loops from pairwise margins, guaranteeing that it identifies the Condorcet winner when $d=1$ and effectively ensuring it at other suitably high values of $d<1$.\footnote{At least, we have yet to find a real-world election where it failed to identify the Condorcet winner at $0.5 \leq d<1$.} Like other notable Condorcet completion methods such as minimax, ranked pairs, and Schulze/beatpath, BallotRank identifies Condorcet winners at $d=1$. Unlike these methods, however, BallotRank is a natural social welfare function offering a full ranking of candidates on the basis of higher-order relationships between preferences.

This paper introduced BallotRank intuitively and rigorously, before demonstrating its application on a sample of tens of thousands of online polls and real-life legislative elections. We find that it correctly identified the Condorcet winner wherever one existed and, when one did not, frequently selected a different winner from comparison methods like minimax, ranked pairs, and Schulze/beatpath. We also prove that BallotRank satisfies common social choice criteria, including anonymity, neutrality, non-dictatorship, majority, Condorcet loser, Pareto, and Smith, while failing IIA, monotonicity, later-no-harm, no-show, and cloning. 

From a practical perspective, we have several recommendations. If the goal is to identify a single winner, then simply implement BallotRank using the power method and a damping parameter value of $d=1$. This will ensure that a Condorcet winner, should one exist, will be identified. When it does not, BallotRank will produce a full ranking and select a reasonable winner. If instead the goal is to produce a complete ranking of the candidates (or select multiple winners), we suggest using the power method to run BallotRank with $d \in [0.85, 1)$. While this will not guarantee successful identification of a Condorcet winner, it will almost invariably do so anyway while producing a full ranking as desired.

Several promising directions for future work emerge from the BallotRank framework. Two natural avenues for exploration are alternative ways of encoding information from the preference profiles into graphs, and alternative algorithms for converting these graphs into social preferences. Along the first line, for example, Atkinson et al. \cite{AGHO:StrongMaxCirc} and Bana et al. \cite{BJNR:ConvergenceVoting} represent profiles as bidirectional edges weighted by pairwise preferences. Along the second line, 
exploring other centrality measures (e.g., eigenvector centrality) or random walk–based scores may yield distinct and interpretable social welfare functions. Understanding which features of BallotRank are essential for Condorcet consistency and which can be modified without sacrificing desirable properties could clarify the space of admissible graph-based voting rules. 

A second set of questions concerns robustness and structural sensitivity. Empirically, BallotRank differs from standard Condorcet completion methods in a subset of elections we tested, most notably in the Scottish local council data (see Table \ref{tab:no-condorcet}), raising the question of what graph-theoretic features drive these divergences. Is disagreement correlated with cycle size, margin asymmetry, degree heterogeneity, or the presence of strong out-of-cycle candidates that act as conduits for redistributed mass? Similarly, in elections featuring Condorcet cycles, BallotRank can exhibit bifurcations as the damping parameter $d$ varies, producing different winners at different values of $d$, as was the case in four of the Scottish elections we studied. As rare as this phenomenon seems to be, understanding it more systematically, possibly by fixing $d$ and constructing extremal examples or by characterizing thresholds at which the stationary distribution shifts qualitatively, could shed light on the method’s sensitivity to global versus local structure. Such analysis may suggest principled, or even adaptive and data-driven, guidelines for choosing $d$ in practice.

\bibliographystyle{alpha}
%

\bibliography{all-the-cites}


\end{document}